\newtheorem{theorem}{Theorem}
\newtheorem{definition}{Definition}
\newtheorem{remark}{Remark}
\newtheorem{prop}{Proposition}
\begin{document}
%

\title{Finite Horizon Energy-Efficient Scheduling with Energy Harvesting Transmitters over Fading Channels}

\author{\IEEEauthorblockN{Baran Tan Bacinoglu\IEEEauthorrefmark{2}, Elif Uysal-Biyikoglu\IEEEauthorrefmark{1}, Can Emre Koksal\IEEEauthorrefmark{3}}
\\
\IEEEauthorblockA{\IEEEauthorrefmark{1}\IEEEauthorrefmark{2}METU, Ankara, Turkey,
\IEEEauthorrefmark{3} The Ohio State University\\
 E-mail:  barantan@metu.edu.tr, uelif@metu.edu.tr, koksal.2@osu.edu ,}

}


%


\maketitle

\begin{abstract}
In this paper, energy-efficient transmission schemes achieving maximal throughput over a finite time interval are studied in a problem setting including energy harvests, data arrivals and channel variation. The goal is to express the offline optimal policy in a way that facilitates a good online solution. We express any throughput maximizing energy efficient offline schedule (EE-TM-OFF) explicitly in terms of {\em water levels}. This allows per-slot real-time evaluation of transmit power and rate decisions, using estimates of the associated offline water levels. To compute the online power level, we construct a stochastic dynamic program that incorporates the offline optimal solution as a stochastic process. We introduce the ``Immediate Fill'' metric which provides a lower bound on the efficiency of any online policy with respect to the corresponding optimal offline solution. The online algorithms obtained this way exhibit performance close to the offline optimal, not only in the long run but also in short problem horizons, deeming them suitable for practical implementations.
\end{abstract}


%
\IEEEpeerreviewmaketitle

\section{Introduction}
\footnote{This paper is an extension of the study reported in ~\cite{6875018}.} 
\def\eg{{e.g.}}
\def\ie{{i.e.}}
Energy efficient packet scheduling with data arrival and deadline constraints has been the topic of numerous studies (e.g., \cite{UPE02,BeGa02,NuSr02,ZaMo09}). {\emph{Energy harvesting}} constraints have been incorporated in the recent years within these offline and online formulations (e.g., \cite{YaU2010,TuYe2010, MAAEUHE2010, Shroff2011, 5766183, Tassiulas2010, HoZang2010,5992841,6135979,6404770,6488477,6710085}.) A criticism that offline formulations often received is that the resulting offline policies did little to suggest good online policies. On the other hand, direct online formulations have been disconnected from offline formulations and the resulting policies (optimal policies or heuristics) have eluded explicit closed form expression as opposed to offline policies.

The problem of throughput maximization in energy harvesting communication systems and networks has also been widely studied and structural properties of throughput maximizing solutions have been investigated. For the throughput maximization problem in ~\cite{HoZang2010} and ~\cite{6202352}, it has been proved that the offline optimal solution can be expressed in terms of multiple distinct \emph{water levels} (to be made precise later in this paper) that are non-decreasing. In ~\cite{5992841}, this result is generalized for a continuous time system by introducing \emph{directional water-filling} interpretation of the offline solution. Similar results are also shown in  ~\cite{6135979}, ~\cite{6404770} and ~\cite{6488477} for the throughput maximization problem over fading channels with energy harvesting transmitters. 

Structural results on optimal adjustment of transmission rate/power according to energy harvesting processes naturally have duality relations with adapting to data arrival processes. Yet, few studies in the literature have addressed energy harvest and data arrival constraints simultaneously. To fill this gap, ~\cite{YaU2010} studied the offline solution that minimizes the transmission completion time where both packet arrivals and energy harvests occur during transmissions under static channel conditions. In~\cite{6354999}, the offline problem in~\cite{YaU2010} was extended to fading channels,  and in~\cite{Hakan} the broadcast channel with energy and data arrivals was considered, though the structure of the proposed solutions are not explicit and furthermore, do not provide much insight into in the derivation of online solutions.

{\emph{Asymptoticaly}} throughput optimal and delay optimal transmission policies were studied in~\cite{Sharma} under stochastic packet and energy arrivals. Online formulations of the energy harvesting scheduling problem based on dynamic programming were considered in~\cite{6202352}, and in~\cite{5992841}, which suggests online heuristics with reduced complexity. In~\cite{6176889}, the online solution maximizing overall throughput was formulated using a Markov Decision Process approach. The MDP approach was also used in~\cite{6334454} to obtain the performance limits of energy harvesting nodes with data and energy buffers. In~\cite{6485022}, a learning theoretic approach was employed to maximize long term (infinite horizon) throughput. The competitive ratio analysis was used in  ~\cite{6609112} for a throughput maximization problem on an energy harvisting channel with arbitrary channel variation and a simple online policy was shown to have a competitive ratio equal to the number of remaining time slots.  Recently, for an energy harvesting system with general  i.i.d. energy arrivals and finite size battery, an online power control policy ~\cite{7511344} was shown to maintain a constant-gap approximation to the optimal long term throughput average.

  The offline problem considering energy arrivals over fading channels has been studied in ~\cite{HoZang2010}, ~\cite{6202352}, ~\cite{5992841} and the offline problem considering energy and data arrivals over a static channel has been investigated in ~\cite{YaU2010}. To the best of our knowledge, the generic solution of the offline problem that considers energy and data arrivals together over time-varying channels has been covered exclusively in ~\cite{6875018} and this study is an extension of ~\cite{6875018}. Most significantly, this study presents an alternative approach to characterize the offline optimal solutions as opposed to an algorithm that iterates throughout the entire schedule rather than focusing on the optimal decision at a particular time slot. The offline solution introduced in this study, differs from existing solutions mainly in the construction of transmission schedules. We characterize the offline problem as that of finding optimal decisions successively in each time slot rather than searching for a complete transmission schedule. In our offline solution, we explicitly formulate the offline optimal decision at a given time slot and accordingly we can construct  offline schedules  slot by slot as if they are online schedules with known arrival patterns.  The offline optimal decisions, which are water levels individually set for each time slot, are expressed as explicit functions of  present energy and data buffer states, channel variations and future energy-data arrivals. In particular, the effect of channel variations are separated for each individual slot with the use of channel correction terms determining the optimal offine water level. This formulation of the offline solution allows us to characterize offline optimal decisions as random variables in a stochastic problem setting as in our online problem. 

The online problem we consider in this study  is formulated through stochastic dynamic programming which is also the typical approach taken by prior studies to express the online solution. On the other hand, different than existing dynamic programming formulations, our formulation of the online throughput maximization problem incorporates offline optimal schedules as stochastic processes that online optimal policies should follow closely and minimize expected regret due the variation of offline optimal decision in each successive online decision. Even tough the estimation of the offline optimal solution is not the primary goal of the online problem, good estimates of the offline optimal solution could capture the most of online optimal policies. In general, the dynamic programming solution suffers from the exponential complexity of the optimal solution as online decisions determine the future states of the system and highly depend on the time evolution of the system state in the optimal sense. In order to overcome this drawback, our online solution relies on offline optimal decisions which already consider future benefits in terms of energy-efficiency, however the cost-to-go function in our online formulation still carries an importance as the system might go to a state where offline optimal decisions can be better estimated.  In addition to this formulation of the online optimal solution, we introduce the immediate fill approach that lower bounds the ratio of expected performance of an online policy relative to the expected performance of offline optimal schedules and also suggests the maximization of the immediate fill metric in every slot to maximize this lower bound. Moreover, based on the offline optimal solution, we propose an online heuristic and through numerical analysis we show that this heuristic can  achieve average throughput rates close to the offline optimal performance even in the finite problem horizons for an arbirary numerical scenario.

\section{System Model}
\label{sec:system model} 
We consider a system (see Figure \ref{watersystem}) of a point-to-point communication channel where an energy harvesting transmitter $S$ sends data to a destination $D$ through a time-varying channel by judiciously adapting its transmission rate and power. The actions of $S$ are governed by three distinct exogenous processes, namely, energy harvesting, packet arrival and channel fading. We consider the system in discrete time and over a finite horizon divided by equal time slots. Let $\lbrace H_n \rbrace$, $\lbrace B_n \rbrace$ and $\lbrace \gamma_n \rbrace$ be discrete time sequences over the finite horizon $n=1,2,\ldots N$, representing energy arrivals, packet arrivals and channel gain, respectively, over a transmission window of $N<\infty$ slots, where $n$ is the time slot index. Particularly, $H_n$ is the amount of energy that becomes available in slot $n$ (harvested during slot $n-1$), $B_n$ is the amount of data that becomes available at the beginning of slot $n$ and $\gamma_n$ is the channel gain observed at slot $n$. 

Let $e_{n}$ and $b_{n}$ be energy and data buffer levels at slot $n$, where transmit power $\rho_n$ is used in slot $n$ and the received power is $\rho_{n}\gamma_n$. 

The transmit power and rate decisions $\rho_{n}$ and $r_{n}$ are assumed to obey a one-to-one relation $r_{n}= f(1+\rho_{n}\gamma_{n})$\footnote{The function $f(\cdot)$ is  a general performance function as in ~\cite{5464730}.}  , where the function $f(\cdot)$ has the following properties: 
\begin{itemize}
\item $f(x)$ is concave, increasing and differentiable.
\item $f(1)=0$ , $f'(1+x)< \infty$ and $\displaystyle\lim_{x\rightarrow \infty}f'(1+x)=0$.
\end{itemize}
The update equations for energy and data buffers can be expressed as in below:

Update Equation for the Energy Buffer:
\begin{equation}
 e_{n+1} =  e_{n} + H_{n} - \rho_{n},\rho_{n} \leq e_{n} \mbox{, for all $n$.}
\label{eq:energybufferww}
\end{equation}

Update Equation for the Data Buffer:
\begin{equation}
b_{n+1}= b_{n} + B_{n} - f(1+\rho_{n}\gamma_{n}), f(1+\rho_{n}\gamma_{n})\leq b_{n} \mbox{, for all $n$.}
\label{eq:databufferww}
\end{equation}

\begin{figure}[htpb]
    \centering \includegraphics[scale=0.68]{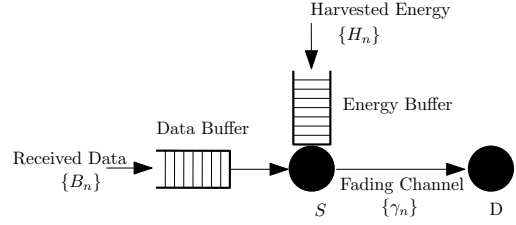}

\caption{An illustration of the system model.}
\label{watersystem} 
\end{figure}
\section{Offline Problem}
\label{sec:offlinesolution}

We consider the following offline problem over a finite horizon of $N$ slots:
\[
\mbox{Maximize}  \displaystyle\sum_{l=1}^{N} f(1+\rho_{l}\gamma_{l})   
\]
\[
\mbox{subject to constraints in  (\ref{eq:energybufferww}) and (\ref{eq:databufferww})} 
\] 
As the problem is offline, we assume $\lbrace H_n \rbrace$, $\lbrace B_n \rbrace$ and $\lbrace \gamma_n \rbrace$ time sequences are a priorly known. Accordingly, energy and data constraints can be completely determined as in the following inequalities:
\begin{equation}
\displaystyle\sum_{l=n}^{n+u}\rho_{l} \leq e_{n}+ \displaystyle\sum_{l=n+1}^{n+u} H_{l}, u=1,2,....,(N-n),
\label{eq:ec}
\end{equation}
\[
\rho_{n} \leq e_{n} \mbox{, for all $n$.}
\]
\begin{equation}
\displaystyle\sum_{l=n}^{n+v} f(1+\rho_{l}\gamma_{l})\leq b_{n}+ \displaystyle\sum_{l=n+1}^{n+v} B_{l}, v=1,2,....,(N-n)
\label{eq:bc}
\end{equation}
\[
f(1+\rho_{n}\gamma_{n})\leq b_{n} \mbox{, for all $n$.}
\]
We make the following definitions to characterize offline policies and depict a clear distinction between the concepts of energy efficiency and throughput maximization.
\begin{definition}
Any  collection of power level decisions $\mathbf{\rho}=(\rho_{1},\rho_{2},....,\rho_{N})$, satisfying energy and data constraints in (\ref{eq:ec}) and (\ref{eq:bc}), is a {\emph{feasible offline schedule}}.
\end{definition}
\begin{figure}[htpb]
    \centering \includegraphics[scale=0.68]{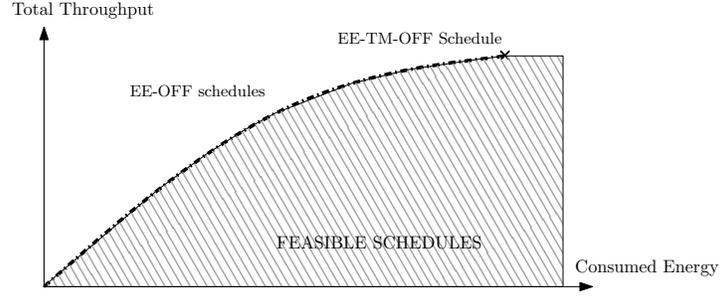}

\caption{An illustration of feasible offline schedules in terms of achieved total throughput versus consumed energy.}
\label{schedulesvw} 
\end{figure}
\begin{definition}
An {\emph{energy efficient offline transmission (EE-OFF) }} schedule is a feasible offline schedule such that there is no other feasible offline schedule that can achieve higher throughput  by consuming the same total amount of energy or achieve the same throughput by consuming less energy for a given realization of $\lbrace H_n \rbrace$, $\lbrace B_n \rbrace$ and $\lbrace \gamma_n \rbrace$ time series. 
\end{definition}
\begin{definition}
Among all EE-OFF schedules, those that achieve the maximum throughput \footnote{Note that not all feasible offline schedules that maximize the total throughput are EE-TM-OFF schedules. A schedule can be throughput optimal by delivering the data received during transmission but this can be done by consuming more energy than the corresponding EE-TM-OFF schedule.}  are called {\emph{energy efficient thoughput maximizing offline transmission (EE-TM-OFF)}} schedules.
\end{definition}
To identify the schedules in an alternative way, we define \textit{water levels} which will be useful in  Theorem \ref{optimalwater}.
\begin{definition}
A water level $w_{n}$  is the unique solution of the following:
\[
\rho_{n} = \frac{1}{\gamma_{n}}\left[ (f')^{-1}(\frac{1}{w_{n}\gamma_{n}})-1\right]^{+} 
\]
\end{definition}
\begin{prop}
The water level $w_n$ is non-decreasing in $\rho_{n}$ and  $ f(1+\rho_{l}\gamma_{l})$.
\end{prop}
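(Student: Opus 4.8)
The plan is to invert the implicit relation defining $w_n$ into an explicit closed form on the active regime and then read off both monotonicity claims directly from the concavity of $f$. First I would restrict attention to the active case $\rho_n>0$, where the $[\cdot]^{+}$ is inactive, so that the defining equation becomes $1+\rho_n\gamma_n = (f')^{-1}\!\left(\frac{1}{w_n\gamma_n}\right)$. Composing both sides with $f'$, which is exactly the inverse of $(f')^{-1}$, removes the implicitness and yields
\begin{equation}
w_n = \frac{1}{\gamma_n\, f'(1+\rho_n\gamma_n)} .
\label{eq:wexplicit}
\end{equation}

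From (\ref{eq:wexplicit}) the first claim is immediate: because $f$ is concave, $f'$ is non-increasing, so as $\rho_n$ grows the argument $1+\rho_n\gamma_n$ grows and $f'(1+\rho_n\gamma_n)$ does not increase; since $\gamma_n>0$ and $f'>0$ on the relevant domain, the right-hand side is non-decreasing in $\rho_n$. For the second claim I would change variables from $\rho_n$ to the rate $r_n=f(1+\rho_n\gamma_n)$. As $f$ is strictly increasing it is invertible, so $1+\rho_n\gamma_n = f^{-1}(r_n)$, and substituting into (\ref{eq:wexplicit}) gives $w_n = 1/\big(\gamma_n\, f'(f^{-1}(r_n))\big)$. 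Since $f^{-1}$ is increasing and $f'$ is non-increasing, the composition $f'(f^{-1}(r_n))$ is non-increasing in $r_n$, and hence $w_n$ is non-decreasing in $r_n=f(1+\rho_n\gamma_n)$, as claimed.

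Finally I would dispose of the boundary. In the cut-off regime where $(f')^{-1}(1/(w_n\gamma_n))\le 1$, equivalently $w_n\le 1/(\gamma_n f'(1))$, the operator $[\cdot]^{+}$ forces $\rho_n=0$ and $r_n=0$, the minimal values of both arguments; this threshold $1/(\gamma_n f'(1))$ is exactly the limit obtained by letting $\rho_n\downarrow 0$ in (\ref{eq:wexplicit}), so the ordering is preserved across the junction. The main thing to watch is precisely this junction, since the map $\rho_n\mapsto w_n$ is not one-to-one on the cut-off set (every $w_n\le 1/(\gamma_n f'(1))$ produces $\rho_n=0$); I would therefore fix the convention that $w_n$ at $\rho_n=0$ denotes this threshold value, so that $w_n$ is a genuine non-decreasing function of $\rho_n$ and of $r_n$ over the whole range rather than only on the interior. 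Beyond this, no step is delicate: once (\ref{eq:wexplicit}) is in hand the result follows solely from $f'$ being non-increasing.
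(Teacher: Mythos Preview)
Your argument is correct and rests on the same single observation as the paper's proof: concavity of $f$ makes $f'$ non-increasing, hence $(f')^{-1}$ is non-increasing, and the monotonicity of $w_n$ in $\rho_n$ (and in $r_n$) follows. The paper states this in one line by noting that $(f')^{-1}\!\big(\tfrac{1}{w_n\gamma_n}\big)$ is non-decreasing in $w_n$, i.e.\ it argues the forward direction $w_n\mapsto\rho_n$ is non-decreasing; you instead invert to the explicit form $w_n=1/(\gamma_n f'(1+\rho_n\gamma_n))$ (which is exactly the content of the paper's Remark~\ref{derivativeremark}) and read off monotonicity of the inverse map directly, also treating the cut-off boundary with more care than the paper does. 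The routes are equivalent, with yours being the more explicit of the two.
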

\begin{proof}
As $f(\cdot)$ is increasing and concave,  $(f')^{-1}(\frac{1}{w_{n}\gamma_{n}})$ is non-decreasing in $w_n$.
\end{proof}
\begin{remark}
\label{derivativeremark}
For $\rho_{n}> 0$, the partial derivative of  $f(1+\rho_{n}\gamma_{n})$ with respect to $\rho_{n}$ is equal to $\frac{1}{w_{n}}$. 
\end{remark}
Clearly, any power level $\rho_{n}$ can be obtained from a properly chosen water level $w_{n}$. Hence, any offline transmission schedule can be also defined by corresponding water levels $(w_{1},w_{2},....,w_{n})$. 

For the solution of offline throughput maximization problem, it will be shown in Theorem \ref{optimalwater} that offline optimal water level for an EE-TM-OFF schedule is the maximum water level that barely empties data or energy buffer if it is applied continuosly.
\begin{theorem}
\label{optimalwater}
In an EE-OFF scheme, the water level $w_{n}$ is bounded as:
\[
w_{n}\leq \min \lbrace w_{n}^{(e)}(w_{n}), w_{n}^{(b)}(w_{n}) \rbrace
\]
where
\[
w_{n}^{(e)}(w_{n})= \displaystyle\min_{u=0,...,(N-n)} \frac{e_{n}+ \displaystyle\sum_{l=n+1}^{n+u} H_{l} + \displaystyle\sum_{l=n}^{n+u} K_{l}^{(e)}(w_{n}) }{u+1} 
\]
\[
w_{n}^{(b)}(w_{n}) =\displaystyle\min_{v=0,...,(N-n)} \frac{b_{n}+ \displaystyle\sum_{l=n+1}^{n+v} B_{l} + \displaystyle\sum_{l=n}^{n+v} K_{l}^{(b)}(w_{n})  }{v+1}
\]
\[
K_{l}^{(e)}(w_{n}) = w_{n}- \frac{1}{\gamma_{l}}\left[ (f')^{-1}(\frac{1}{w_{n}\gamma_{l}})-1\right]^{+}, K_{l}^{(b)}(w_{n}) = w_{n} - f\left( 1+\left[ (f')^{-1}(\frac{1}{w_{n} \gamma_{l}})-1\right]^{+} \right)  
\]

Particularly, water levels in an EE-TM-OFF schedule should satisfy the inequality above with equality, i.e. $w_{n}^{*}=\min \lbrace w_{n}^{(e)}(w_{n}^{*}), w_{n}^{(b)}(w_{n}^{*}) \rbrace$ for all $n$ in $\lbrace 1,2, ..... , N\rbrace$.
\end{theorem}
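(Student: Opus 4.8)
The plan is to reduce both assertions to a single structural fact — that the water levels of an EE-OFF schedule are non-decreasing in $n$ — and then to pin down the equality for EE-TM-OFF by an optimality (exchange) argument. Throughout, write $\rho_l(w):=\frac{1}{\gamma_l}\left[(f')^{-1}(\frac{1}{w\gamma_l})-1\right]^+$ for the power that water level $w$ would induce at slot $l$; then $K_l^{(e)}(w)=w-\rho_l(w)$ and $K_l^{(b)}(w)=w-f(1+\rho_l(w)\gamma_l)$. Substituting these into the definitions and cancelling the common $(u+1)w$, resp. $(v+1)w$, term shows that the two inequalities $w_n\le w_n^{(e)}(w_n)$ and $w_n\le w_n^{(b)}(w_n)$ are \emph{algebraically equivalent} to the families (\ref{eq:ec}) and (\ref{eq:bc}) evaluated at the constant profile $\rho_l=\rho_l(w_n)$, namely $\sum_{l=n}^{n+u}\rho_l(w_n)\le e_n+\sum_{l=n+1}^{n+u}H_l$ for all $u$ and $\sum_{l=n}^{n+v}f(1+\rho_l(w_n)\gamma_l)\le b_n+\sum_{l=n+1}^{n+v}B_l$ for all $v$. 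So the whole theorem is really a statement about feasibility of this constant-water profile.

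First I would prove the monotonicity lemma: in any EE-OFF schedule $w_n\le w_{n+1}$ for every $n$. Suppose $w_n>w_{n+1}$, and redistribute rate between slots $n$ and $n+1$ holding the total transmitted rate $f(1+\rho_n\gamma_n)+f(1+\rho_{n+1}\gamma_{n+1})$ fixed while moving the two water levels toward each other. By Remark \ref{derivativeremark} the marginal energy per unit rate in a slot equals its water level, so for a fixed total rate the spent energy $\rho_n+\rho_{n+1}$ strictly decreases upon equalizing when $w_n\neq w_{n+1}$. This perturbation leaves every cumulative data constraint intact (the pair total, hence every prefix past $n+1$, is preserved while the prefix through $n$ only drops) and relaxes every cumulative energy constraint; it is therefore feasible and attains the \emph{same throughput with strictly less energy}, contradicting energy efficiency. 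Hence the water levels are non-decreasing. Using this, $w_l\ge w_n$ for $l\ge n$ together with the monotonicity of $\rho_l(\cdot)$ and of $f(1+\cdot\,\gamma_l)$ gives $\rho_l(w_n)\le\rho_l$ and $f(1+\rho_l(w_n)\gamma_l)\le f(1+\rho_l\gamma_l)$; summing and invoking feasibility of the true schedule yields the two constant-profile inequalities, i.e. the first assertion.

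For the equality, set $g(w):=\min\{w_n^{(e)}(w),w_n^{(b)}(w)\}-w$. Since $\rho_l(w)$ and $f(1+\rho_l(w)\gamma_l)$ increase in $w$, each bracketed average is non-increasing while the subtracted $w$ increases, so $g$ is continuous and strictly decreasing with $g(0)\ge0$ and $g(w)\to-\infty$; let $w^{**}_n$ be its unique root. The inequality just proved is $g(w_n^*)\ge0$, i.e. $w_n^*\le w^{**}_n$, and the desired equality is precisely $w_n^*=w^{**}_n$. To exclude $w_n^*<w^{**}_n$ I would partition the optimal schedule into maximal blocks of constant water level (possible by monotonicity) and show that on each block a cumulative energy or data constraint binds \emph{at a horizon lying inside the block}. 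On such a block the true powers coincide with the constant profile $\rho_l(w_n^*)$, so a binding constraint there turns one of the two inequalities into an equality, forcing $w_n^*=\min\{w_n^{(e)}(w_n^*),w_n^{(b)}(w_n^*)\}$, and a short index computation propagates this to every slot of the block. For the final block this is immediate: if the energy and data constraints through $N$ both had slack, uniformly raising the block's water level would increase throughput while remaining feasible, contradicting throughput maximality. For an interior block ending at $m$ with an upward jump $w_{m+1}^*>w_m^*$, if every cumulative constraint through $m$ had slack I would raise the water on the block (spending the available energy slack) and compensate by lowering it on the later, higher-water slots so that total energy past $m$ is unchanged; by Remark \ref{derivativeremark} and $w_m^*<w_{m+1}^*$ this strictly increases throughput, again contradicting optimality.

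The hard part is exactly this coordinated perturbation in the last step: it must respect \emph{both} resource constraints and the causality of energy and data at once — energy cannot be shifted backward in time, so the improvement has to be generated from the block's own slack rather than borrowed from the future — and it must keep every intermediate prefix, not merely the block endpoint, feasible. Choosing the block and the perturbation direction so that all of these hold simultaneously, including the degenerate cases where a buffer empties and the water level is pinned only up to a threshold, is where the real work lies; by contrast the monotonicity lemma and the inequality become essentially mechanical once the $K_l$ terms are recognized as the constant-water corrections $w-\rho_l(w)$ and $w-f(1+\rho_l(w)\gamma_l)$.
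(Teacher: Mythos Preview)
Your reduction of the inequality to feasibility of the constant-water profile, together with the monotonicity lemma, is exactly the paper's Part~(i)--(ii) structure. The paper proves monotonicity via two symmetric Lagrangian subproblems (fix total rate and minimize energy; fix total energy and maximize rate), whereas your single exchange argument (fix rate, reduce energy) already suffices, and your feasibility check of the perturbation is cleaner than the paper's. So on the first half you and the paper coincide.

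Where you diverge is the equality for EE-TM-OFF. You set up a block decomposition and argue by contradiction that some cumulative constraint must bind at each block boundary, then push the coordinated raise/lower perturbation through both buffers simultaneously; you correctly flag this joint-feasibility step as the hard part. The paper avoids this difficulty entirely by arguing \emph{forward} rather than by contradiction: given that the schedule is EE-OFF (so $w_m\ge w_n$ for $m>n$), Remark~\ref{derivativeremark} says the marginal throughput return on a unit of power at slot $m$ is $1/w_m\le 1/w_n$, hence deferring any resource from slot $n$ to a later slot can never increase total throughput. Therefore the throughput-maximizing EE-OFF choice at slot $n$ is simply the largest feasible water level, namely $\min\{w_n^{(e)}(w_n),w_n^{(b)}(w_n)\}$, and iterating slot by slot gives the EE-TM-OFF schedule. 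This constructive argument needs no block structure and no two-sided perturbation, so it bypasses exactly the obstacle you isolate. Your route would work, but the paper's is shorter and makes the ``hard part'' disappear.
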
 
\begin{proof}
See the Appendix.
\end{proof} 
\section{Offline Problem with Logarithmic Rate Function}
\label{sec:problemdefinition} 
In the offline problem, the throughput function $f(\cdot)$ could be chosen as $\frac{1}{2}\log_{2}(\cdot)$ that represents the AWGN capacity of the channel.  Then, the water level $w_{n}$ in this case determines the power level $\rho_{n}$ as $\rho_{n} = \frac{1}{\gamma_{n}}\left[ \frac{\ln(2)}{2}w_{n}\gamma_{n}-1\right]^{+}$.
For this case, an EE-TM-OFF schedule can be obtained by setting the water level $w_{n}$\footnote{Since $\frac{\ln(2)}{2}$ is a constant, in the rest, we will reset $w_{n}$ to $\frac{\ln(2)}{2}w_{n}$ in order to simplify the notation.}  to $\min\lbrace w_{n}^e, w_{n}^b\rbrace$ for each time slot $n$ where $w_{n}^{e}$ and $w_{n}^{b}$ are defined as follows:
\begin{footnotesize}
\begin{equation}
w_{n}^{e} =\displaystyle\min_{u=0,...,(N-n)} \frac{e_{n}+ \displaystyle\sum_{l=n+1}^{n+u} H_{l}+\displaystyle\sum_{l=n}^{n+u}  M_{l}^{(e)}(w_{n}) }{u+1}
\label{eq:eqwe}   
\end{equation}
\end{footnotesize}
\begin{footnotesize}
\begin{equation}
\log_{2}(w_{n}^{b})=\displaystyle\min_{v=0,...,(N-n)} \frac{b_{n}+ \!\!\!\!\!\displaystyle\sum_{l=n+1}^{n+v} \!\!\!\!\!\ B_{l} +\frac{1}{2}\displaystyle\sum_{l=n}^{n+v}  M_{l}^{(b)}(w_{n}) }{\frac{1}{2}(v+1)}   
\label{eq:eqwb} 
\end{equation}
\end{footnotesize}
where 
\[
 M_{l}^{(e)}(w_{n})= \min \left\lbrace \frac{1}{\gamma_{l}},w_{n}\right\rbrace , M_{l}^{(b)}(w_{n})= \log_{2}\left( \min  \left\lbrace \frac{1}{\gamma_{l}},w_{n}\right\rbrace\right)
\]
\label{eetmoffwater}
The characterization of the offline optimal water level can be explicitly expressed as in the above. Due to the correction terms $M_{l}^{(e)}(w_{n})$ and $M_{l}^{(b)}(w_{n})$, the offline optimal water level $w_{n}^{*}$ corresponds to the unique fixed point of $\min\lbrace w_{n}^e, w_{n}^b\rbrace$ and should be computed iteratively. To find the water level satisfying $\min\lbrace w_{n}^e, w_{n}^b\rbrace$, any  fixed point iteration method can be used. For example, the throughput maximizing water level $w_{n}^{*}$ can be found by iteratively evaluating $\min\lbrace w_{n}^e, w_{n}^b\rbrace$  as follows:
\begin{equation}
\label{eetmoffwateritarate}
w_{n}^{(k+1)}=\vert_{w_{n}=w_{n}^{(k)}}\min\lbrace w_{n}^e, w_{n}^b\rbrace
\end{equation}
where $w_{n}^{(1)}=w^{max}_{n}$ which is  guaranteed to be higher than  $w_{n}^{*}$.
The proposotion in the below states that the iteration in Eq. \ref{eetmoffwateritarate} converges.
\begin{prop}
\label{convergenceprop}
The sequence of water level iterations, $w_{n}^{(1)}, w_{n}^{(2)},....$ converges to $w_{n}^{*}$.
\end{prop}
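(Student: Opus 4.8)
The plan is to read the iteration in Eq.~(\ref{eetmoffwateritarate}) as a scalar fixed-point iteration $w_n^{(k+1)}=T\big(w_n^{(k)}\big)$, where $T(w)$ denotes the value of $\min\{w_n^{e},w_n^{b}\}$ produced by (\ref{eq:eqwe}) and (\ref{eq:eqwb}) when the correction terms are evaluated at $w_n=w$, and then to invoke the monotone convergence theorem. Concretely I would establish three properties of $T$: it is non-decreasing, it is continuous, and $w_n^{*}$ is its unique fixed point. Granting these, the iterates form a monotone sequence confined to $[w_n^{*},w_n^{\max}]$, hence converge, and continuity forces the limit to equal $w_n^{*}$.

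First I would prove monotonicity and continuity. The dependence of the right-hand sides of (\ref{eq:eqwe}) and (\ref{eq:eqwb}) on $w$ enters only through the correction terms $M_l^{(e)}(w)=\min\{1/\gamma_l,w\}$ and $M_l^{(b)}(w)=\log_2\big(\min\{1/\gamma_l,w\}\big)$, both of which are continuous and non-decreasing in $w$. Hence each numerator, and therefore each individual ratio indexed by $u$ (respectively $v$), is continuous and non-decreasing in $w$. A finite minimum of such functions inherits both properties, so $w_n^{e}$ and $\log_2(w_n^{b})$ are continuous and non-decreasing; applying the increasing map $x\mapsto 2^{x}$ shows the same for $w_n^{b}$, and thus $T=\min\{w_n^{e},w_n^{b}\}$ is continuous and non-decreasing in $w$.

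Next I would exhibit the iterates as a non-increasing sequence bounded below by $w_n^{*}$. Because the correction terms saturate for large $w$ (at $M_l^{(e)}=1/\gamma_l$ and $M_l^{(b)}=\log_2(1/\gamma_l)$), the non-decreasing map $T$ is bounded above by the finite value $\sup_{w}T(w)$; choosing the initialization $w_n^{(1)}=w_n^{\max}\ge \sup_{w}T(w)$ then gives both $w_n^{\max}\ge \sup_{w}T(w)\ge T(w_n^{*})=w_n^{*}$ and $T\big(w_n^{(1)}\big)\le \sup_{w}T(w)\le w_n^{\max}=w_n^{(1)}$, so $w_n^{(2)}\le w_n^{(1)}$. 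Using that $T$ is non-decreasing, induction yields $w_n^{(k+1)}=T\big(w_n^{(k)}\big)\le T\big(w_n^{(k-1)}\big)=w_n^{(k)}$ for all $k$, so the sequence is non-increasing; and since $w_n^{(1)}\ge w_n^{*}=T(w_n^{*})$, monotonicity gives $w_n^{(k)}\ge w_n^{*}$ for every $k$. A non-increasing sequence bounded below converges, say to $w_n^{\infty}\ge w_n^{*}$.

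Finally I would identify the limit: passing to the limit in $w_n^{(k+1)}=T\big(w_n^{(k)}\big)$ and using continuity of $T$ gives $w_n^{\infty}=T(w_n^{\infty})$, so $w_n^{\infty}$ is a fixed point of $\min\{w_n^{e},w_n^{b}\}$, and by the uniqueness of this fixed point (the characterization of $w_n^{*}$ recalled just before Eq.~(\ref{eetmoffwateritarate})) we conclude $w_n^{\infty}=w_n^{*}$. I expect the main obstacle to be securing the very first inequality $T\big(w_n^{(1)}\big)\le w_n^{(1)}$, i.e.\ choosing $w_n^{\max}$ so that the map cannot increase at the first step; the clean resolution is to require $w_n^{\max}\ge \sup_{w}T(w)$, which is finite precisely because the correction terms saturate, after which monotonicity of $T$ propagates both the decrease and the lower bound to all subsequent iterates with no further work.
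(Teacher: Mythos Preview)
Your proof is correct and follows essentially the same approach as the paper: both arguments exploit that the map $T(w)=\min\{w_n^{e},w_n^{b}\}$ is non-decreasing in $w$, start the iteration above any fixed point so that the iterates are non-increasing, and conclude convergence to $w_n^{*}$. Your version is in fact more careful than the paper's---you make the saturation argument for $\sup_w T(w)$ explicit, bound the iterates below by $w_n^{*}$ rather than merely by zero, and use continuity together with uniqueness of the fixed point to identify the limit, whereas the paper simply asserts that since the iterations do not stop before reaching $w_n^{*}$, the limit must be $w_n^{*}$.
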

\begin{proof}
From (\ref{eq:eqwe}),(\ref{eq:eqwb}) and (\ref{eetmoffwateritarate}), $w_{n}^{(k+1)}$ is decreasing with  decreasing $w_{n}^{(k)}$. Accordingly, if $w_{n}^{(k+1)}<w_{n}^{(k)}$ for some $k$, then $w_{n}^{(k+2)}<w_{n}^{(k+1)}$ should be true and setting $w_{n}^{(1)}$  to a large enough value can guarantee that $w_{n}^{(2)}<w_{n}^{(1)}$ . As $w_{n}^{(k)}$'s are bounded below by zero, the iterations converge. Unless $w_{n}^{*}$ is reached, the iterations have not stopped, hence the iterations will converge to $w_{n}^{*}$ if $w_{n}^{(1)}$ is above $w_{n}^{*}$.
\end{proof}
The offline optimal power level $\rho_{n}^{*}$ that maximizes total throughput can be approached by computing the sequence, $w_{n}^{(1)},w_{n}^{(2)},....$, which converges  $w_{n}^{*}$ by Proposition \ref{convergenceprop}.
\begin{equation}
\rho_{n}^{*}=\lim_{k \rightarrow \infty } \left[w_{n}^{(k)}-\frac{1}{\gamma_{n}}\right]^{+}
\end{equation}
\vspace{-0.099in} 
\begin{figure}[htpb]
    \centering \includegraphics[scale=0.44]{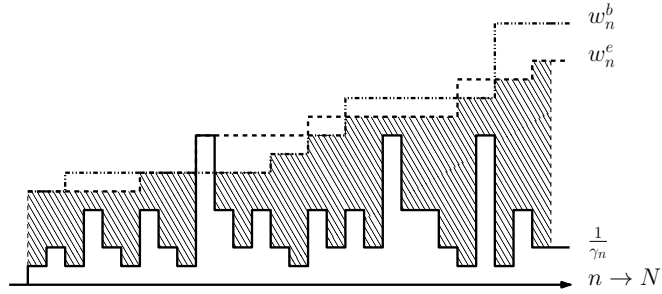}

\caption{An illustration of an EE-TM-OFF policy.}
\label{offlinecavew} 
\end{figure}
\section{Online Problem}
The online problem formulation is an online counterpart of the offline problem with logarithmic \footnote{For the sake of simplicity, the logarithmic rate function will be used in online formulations. However, similar formulations and results can be obtained also for the general concave function $f(\cdot)$.}  rate function. We formulate the problem as a dynamic programming to  maximize the expected total throughput.   
\begin{figure}[htpb]
\centering
  \begin{psfrags}
    \psfrag{A}[t]{Slot Index}
    \psfrag{B}[b]{Water Levels}
    \psfrag{C}[l]{\tiny{EE-TM-OFF}}
    \psfrag{D}[l]{\tiny{Expected EE-TM-OFF}}
    \includegraphics[scale=0.27]{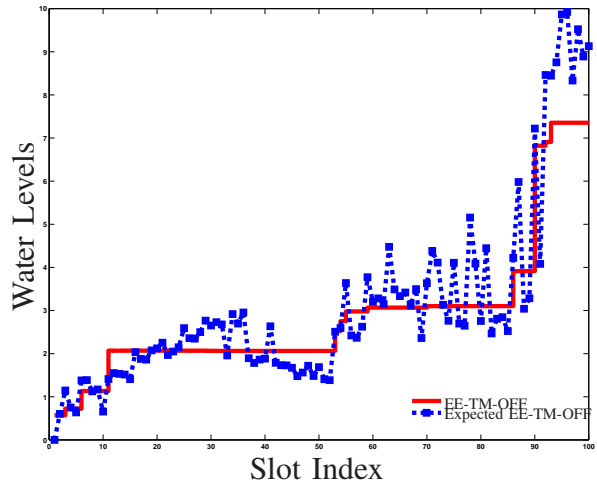}
    \end{psfrags}
\caption{Sample water levels of an EE-TM-OFF schedule and the expected EE-TM-OFF given knowledge for a sample realization where energy harvesting, data arrival and channel fading processes are generated by 4 state DTMCs.}
\label{fig:waterlevelsw}
\end{figure}
Let $x_{n}=(e_{n},b_{n},\gamma_{n})$ be the state vector, $\theta_{n}=(H_{1}^{n},B_{1}^{n},\gamma_{1}^{n})$ be the history and $X_{n}=(H_{n+1},B_{n+1},\gamma_{n+1}-\gamma_{n})$ exogeneous processes at the slot $n$.

Define  $A(x_{n})$ as the set of admissible decisions such that  $[w_{n}-\frac{1}{\gamma_{n}}]^{+} \leq e_{n}$ and $[\log_{2}(w_{n}\gamma_{n})]^{+}  \leq b_{n}$, $\forall w_{n} \in A(x_{n})$. For $w_{n} \in A(x_{n})$, the dynamic program for throughput maximization  can be written as below:
\begin{equation}
\hat{V}_{n \mid \theta_{n}}^{*}(x_{n})=\displaystyle\max_{w_{n} \in A(x_{n})} \hat{V}_{n \mid \theta_{n}}(w_{n}, x_{n})
\label{eq:naivedynamic}
\end{equation}
\begin{equation}
\hat{V}_{n \mid \theta_{n}}w_{n}, x_{n})=[\log_{2}(w_{n}\gamma_{n})]^{+} + \displaystyle E_{X_{n}} [\hat{V}_{n+1 \mid \theta_{n+1}}^{*}(x_{n}+X_{n}-\phi(w_{n};\gamma_{n}))\mid x_{n},\theta_{n}]
\end{equation}
where $\phi(w_{n};\gamma_{n})=([w_{n}-\frac{1}{\gamma_{n}}]^{+},[\log_{2}(w_{n}\gamma_{n})]^{+} ,0)$ and $\psi_{n}=(H_{n+1}^{N},B_{n+1}^{N},\gamma_{n+1}^{N})$ represents the exogeneous processes for slots between $n$ and $N$.

The solution of this dynamic programming formulation constitutes the online optimal policy maximizing expected total throughput to be achieved within the finite problem horizon. The drawback of this solution is that it suffers from the exponential time/memory computational complexity of the dynamic programming. On the other hand, when the vector $\psi_{n}=(H_{n+1}^{N},B_{n+1}^{N},\gamma_{n+1}^{N})$ is deterministic, the online problem is no different than the offline problem. The solution to the offline problem for the realization of $\psi_{n}$ can be a  reference for the online problem.  We observed that a policy, which simply applies the statistical average of EE-TM-OFF water levels as its online water level at each and every time slot, typically closely  follows the original EE-TM-OFF schedule (Fig. \ref{fig:waterlevelsw}). Motivated by this observation, we consider EE-TM-OFF  decisions as stochastic processes in the online problem domain.  The next subsection will introduce an alternative dynamic programming formulation of  minimizing the expected throughput loss of the online decisions with respect to the corresponding offline optimal decisions.
\subsection{Online Solution Based On Offline Solution} 
 Let $\tilde{w}_{n}^{*}=\tilde{w}_{n}^{*}(x_{n})$ be the offline optimal water level which is a random variable generated over the realizations of $\psi_{n}$ given the state vector $x_{n}$. Then, the total throughput achieved by applying offline optimal water levels until the end of transmission time window can be expressed as:
\begin{equation}
\tilde{V}_{n \mid \theta_{n}}^{*}(x_{n})=[\log_{2}(\tilde{w}_{n}^{*}\gamma_{n})]^{+}+\tilde{V}_{n+1 \mid \theta_{n+1}}^{*}(x_{n}+X_{n}-\phi(\tilde{w}_{n}^{*};\gamma_{n}))
\end{equation}
The online throughput maximization problem can be reformulated by the following cost minimization problem:
\begin{equation}
J_{n \mid \theta_{n}}^{*}(x_{n})=\displaystyle\min_{w_{n} \in A(x_{n})} J_{n \mid \theta_{n}}(w_{n},x_{n})
\end{equation}
where
\begin{equation}
J_{n \mid \theta_{n}}(w_{n},x_{n})=E_{\psi_{n}}[\tilde{V}_{n \mid \theta_{n}}^{*}(x_{n})\mid x_{n},\theta_{n}]-\hat{V}_{n \mid \theta_{n}}(w_{n}, x_{n})
\end{equation}
The cost function $J_{n \mid \theta_{n}}(w_{n},x_{n})$ can be separated into two parts: 
\begin{itemize}
\item The expected  throughput achieved by applying offline water levels for slots $[n,N]$ minus the expected throughput achieved by applying the decision $w_{n}$ at the slot $n$, then applying offline optimal water levels for the rest, i.e. in $[n+1,N]$. Let $E_{\psi_{n}}[\tilde{F}_{n}(\tilde{w}_{n}^{*}, w_{n})\mid x_{n},\theta_{n}]$ represent this term.
\item The expected throughput achieved by applying offline water levels for slots $[n+1,N]$ minus the expected total throughput achieved by online optimal decision for slots $[n+1,N]$ after the decision $w_{n}$ is applied at the slot $n$. Let $D_{n+1 \mid \theta_{n+1}}^{*}(x_{n}+X_{n}-\phi(w_{n};\gamma_{n}))$ represent this term.
\end{itemize}
\begin{equation}
J_{n \mid \theta_{n}}(w_{n},x_{n}) ) = E_{\psi_{n}}[\tilde{F}_{n}(\tilde{w}_{n}^{*}, w_{n})\mid x_{n},\theta_{n}]+D_{n+1 \mid \theta_{n+1}}^{*}(x_{n}+X_{n}-\phi(w_{n};\gamma_{n}))
\label{offonline} 
\end{equation}
%
Clearly, both of the terms are nonnegative for any $w_{n}$ since, by definition, EE-TM-OFF schedules are superior to online throughput maximizing schedules for any given realization.   

The first term $E_{\psi_{n}}[\tilde{F}_{n}(\tilde{w}_{n}^{*}(x_{n}), w_{n})\mid x_{n},\theta_{n}]$ is the conditional expectation of the variable $\tilde{F}_{n}(\tilde{w}_{n}^{*}, w_{n})$ as  follows:
\[
\tilde{F}_{n}(\tilde{w}_{n}^{*}, w_{n})= ( \log_{2}(\tilde{w}_{n}^{*}\gamma_{n}) )^{+}-(\log_{2}(w_{n}\gamma_{n}))^{+}
\]
\[
+\tilde{V}_{n+1 \mid \theta_{n+1}}^{*}(x_{n}+X_{n}-\phi(\tilde{w}_{n}^{*};\gamma_{n}))-\tilde{V}_{n+1 \mid \theta_{n+1}}^{*}(x_{n}+X_{n}-\phi(w_{n};\gamma_{n}))
\]
The equation in (\ref{offonline}) can be rewritten as in below:
\begin{equation}
J_{n \mid \theta_{n}}(w_{n},x_{n})=E_{\psi_{n}}[F_{n}(\tilde{w}_{n}^{*}, w_{n})\mid x_{n},\theta_{n}]+D_{n+1 \mid \theta_{n+1}}^{*}(x_{n}+X_{n}-\phi(w_{n};\gamma_{n}))
\label{offonlineF} 
\end{equation}
where $F_{n}(\tilde{w}_{n}^{*}, w_{n})=E_{\psi_{n}}[\tilde{F}_{n}(\tilde{w}_{n}^{*}, w_{n})\mid \tilde{w}_{n}^{*},x_{n},\theta_{n}]$.

Accordingly, the function $F_{n}(\tilde{w}_{n}^{*}, w_{n})$ can be seen as a loss function for the decision $w_{n}$ since it corrresponds to the throughput loss that cannot be recovered  even with offline optimal policies. The expectation of this loss term will be called as the \emph{immediate loss} of the decision $w_{n}$ as we define in below.
\begin{definition}
Define $E_{\psi_{n}}[F_{n}(\tilde{w}_{n}^{*}, w_{n})\mid x_{n},\theta_{n}]$ as the \emph{immediate loss} of the decision $w_{n}$. 
\end{definition}  
On the other hand, the second term $D_{n+1 \mid \theta_{n+1}}^{*}(\cdot)$ can be expressed as :
\begin{equation}
D_{n+1 \mid \theta_{n+1}}^{*}(x_{n+1})=E_{X_{n}}[J_{n+1 \mid \theta_{n+1}}^{*}(x_{n+1})\mid x_{n},\theta_{n}]
\end{equation}
where $x_{n+1}=x_{n}+X_{n}-\phi(w_{n};\gamma_{n})$.

Therefore, the problem has the following dynamic programming formulation:
\begin{equation}
J_{n \mid \theta_{n}}^{*}(x_{n})= \displaystyle\min_{w_{n} \in A(x_{n})} E_{\psi_{n}}[F_{n}(\tilde{w}_{n}^{*}, w_{n})\mid x_{n},\theta_{n}]+E_{X_{n}}[J_{n+1 \mid \theta_{n+1}}^{*}(x_{n}+X_{n}-\phi(w_{n};\gamma_{n}))\mid x_{n},\theta_{n}]
\label{offonlineDP} 
\end{equation}
As this formulation is  equivalent to the initial formulation in (\ref{eq:naivedynamic}), its solution  gives the online optimal policy. While the exact computation of this solution may also have exponential complexity, the formulation will lead us to define the \emph{immediate fill} metric which will be a vehicle toward the derivation of online solutions with performance guarantees. 
\subsection{Immediate Fill}
The performance of any online policy $w$ can be also evaluated by the ratio of its expected total throughput to the expected total throughput of the offline optimal policies.
\begin{definition}
Define the \emph{online-offline efficiency}, or simply the efficiency of an online policy $w$ as follows:
 \begin{equation}
 \eta^{w}(x_{n},\theta_{n}) = \frac{\hat{V}_{n \mid \theta_{n}}^{w}(x_{n})}{E_{\psi_{n}}[\tilde{V}_{n \mid \theta_{n}}^{*}(x_{n})\mid x_{n},\theta_{n}]}
 \end{equation}
where $\hat{V}_{n \mid \theta_{n}}^{w}(x_{n})$ is the expected total throughput achieved by the online policy $w$ given the present state $x_{n}$ and the history $\theta_{n}$.
\end{definition}
Any decision in the online schedule will incur an immediate throughput gain. However, this decision also may cause a loss of potential future throughput that would be accessible to an offline algorithm. We call this the immediate loss as we define in the previous section.
\begin{definition}
 Define the ratio of immediate gain to its sum with immediate loss as \emph{immediate fill}. For the slot $n$, let  $\mu_{n}^{w}(x_{n},\theta_{n})$ be the immediate fill of policy $w$ when the system is in $x_{n}$  with the history $\theta_{n}$. 
\begin{equation}
\mu_{n}^{w}(x_{n},\theta_{n}) = \frac{(\log_{2}(w_{n}\gamma_{n}))^{+}}{(\log_{2}(w_{n}\gamma_{n}))^{+}+E_{\psi_{n}}[F_{n}(\tilde{w}_{n}^{*}, w_{n})\mid x_{n},\theta_{n}]}
\end{equation}
\end{definition}
According to the above definition of the immediate fill $\mu_{n}^{w}(x_{n},\theta_{n})$, we will show that the minimal immediate fill of the policy $w$ lower bounds its online-offline efficiency.
\begin{figure}[htpb]
    \centering \includegraphics[scale=0.76]{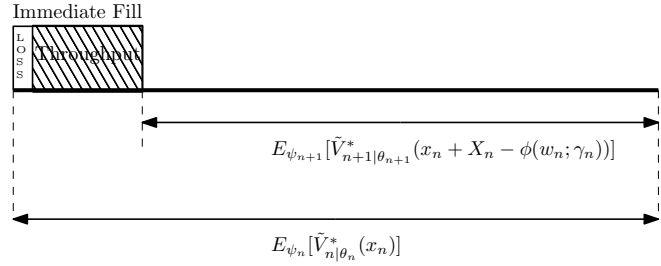}

\caption{An illustration of the immediate fill approach. The expectation of the achievable total throughput by offline optimal decisions decreases as the state of the system changes due to an online decision. Hence each online decision opens a gap between expected throughput potentials of offline optimal policy and this gap is partially filled by the throughput gain achieved within the corresponding slot.}
\label{offlinecavew} 
\end{figure}
\begin{theorem}
\label{thm:minmumfill}
The efficiency of an online policy $w$ with $w_{N}=\tilde{w}_{N}^{*}$ is lower bounded by the minimum immediate fill observed by that policy:
 \begin{equation}
 \label{lowerfill}
 \eta^{w}(x_{n},\theta_{n}) \geq \displaystyle\min_{m \geq n} \displaystyle\min_{(x_{m},\theta_{m})} \mu_{m}^{w}(x_{m},\theta_{m})
 \end{equation}  
\end{theorem}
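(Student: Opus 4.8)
The plan is to work with the \emph{gap} $\Delta_{n}(x_n) := E_{\psi_{n}}[\tilde{V}_{n \mid \theta_{n}}^{*}(x_{n})\mid x_{n},\theta_{n}] - \hat{V}_{n \mid \theta_{n}}^{w}(x_{n})$ between the expected offline potential and the online value under policy $w$, so that the efficiency rewrites as $\eta^{w}(x_{n},\theta_{n}) = 1 - \Delta_{n}(x_n)/E_{\psi_{n}}[\tilde{V}_{n \mid \theta_{n}}^{*}(x_{n})\mid x_{n},\theta_{n}]$. Writing $\mu^{*}:=\min_{m\ge n}\min_{(x_m,\theta_m)}\mu_{m}^{w}(x_m,\theta_m)$ for the right-hand side of (\ref{lowerfill}), the theorem is then equivalent to the single bound $\Delta_{n} \leq (1-\mu^{*})\, E_{\psi_{n}}[\tilde{V}_{n}^{*}]$, which I would target directly.

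First I would set up two one-step identities that share the \emph{same} online-induced next state $x_{n+1}=x_{n}+X_{n}-\phi(w_{n};\gamma_{n})$. Abbreviating the immediate gain by $G_{n}=(\log_{2}(w_{n}\gamma_{n}))^{+}$ and the immediate loss by $L_{n}=E_{\psi_{n}}[F_{n}(\tilde{w}_{n}^{*},w_{n})\mid x_{n},\theta_{n}]$, the decomposition (\ref{offonlineDP}), read for the fixed policy $w$ rather than the optimum, gives the gap recursion $\Delta_{n}=L_{n}+E_{X_{n}}[\Delta_{n+1}(x_{n+1})]$. Separately, substituting the online action into the offline potential and unwinding the definition of $F_{n}$ yields $E_{\psi_n}[\tilde V^*_n] = G_{n}+L_{n}+E_{X_{n}}\big[E_{\psi_{n+1}}[\tilde V^*_{n+1}(x_{n+1})]\big]$; that is, acting once with $w_n$ and optimally offline thereafter forfeits exactly the immediate loss relative to the fully offline potential.

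The core is a backward induction on the slot index establishing $\Delta_{m}\le(1-\mu^{*})E_{\psi_m}[\tilde V^*_m]$ at every reachable $(x_m,\theta_m)$. The base case $m=N$ is immediate: the theorem assumes $w_{N}=\tilde{w}_{N}^{*}$, so online and offline coincide in the last slot, $\Delta_{N}=0$, and the inequality holds with both sides nonnegative (offline dominates online and throughputs are nonnegative, so $\Delta\ge0$ and $\mu^{*}\in[0,1]$). For the inductive step, assuming the bound at $m+1$, I substitute the hypothesis into the gap recursion and compare against the offline decomposition; the common continuation term $(1-\mu^{*})E_{X_{m}}\big[E_{\psi_{m+1}}[\tilde V^*_{m+1}]\big]$ cancels on both sides, reducing the claim to $\mu^{*}L_{m}\le(1-\mu^{*})G_{m}$. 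This is \emph{precisely} $\mu_{m}^{w}\ge\mu^{*}$ after clearing the denominator of $\mu_{m}^{w}=G_{m}/(G_{m}+L_{m})$, which holds by the definition of $\mu^{*}$. Taking $m=n$ and translating back through $\eta^{w}=1-\Delta_{n}/E_{\psi_n}[\tilde V^*_n]$ delivers (\ref{lowerfill}).

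The step I expect to require the most care is verifying that the two one-step identities are driven by the \emph{same} online-induced next state $x_{n+1}$, so that the $E_{X_{n}}[\cdot]$ continuation terms in the gap recursion and in the offline decomposition match and cancel exactly in the inductive step; relatedly, I must bookkeep the nested conditioning correctly, justifying that advancing one slot over $X_n$ and then taking the offline expectation over $\psi_{n+1}$ reproduces $E_{\psi_n}[\tilde V^*_n]$ as $G_n+L_n+E_{X_n}\big[E_{\psi_{n+1}}[\tilde V^*_{n+1}]\big]$. Once that alignment and the nonnegativity of $G_m,L_m,\Delta_m$ are confirmed, the algebra collapsing the induction onto $\mu_m^{w}\ge\mu^{*}$ is routine.
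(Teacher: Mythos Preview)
Your proposal is correct and follows essentially the same approach as the paper: backward induction from the terminal slot (where $w_N=\tilde w_N^*$ forces equality), using the one-step decomposition $E_{\psi_n}[\tilde V_n^*]=G_n+L_n+E_{X_n}\big[E_{\psi_{n+1}}[\tilde V_{n+1}^*(x_{n+1})]\big]$ at the online-induced next state, and reducing the inductive step to $\mu_m^{w}\ge \mu^{*}$. The only cosmetic difference is that the paper works directly with the ratio $\eta^{w}$ and applies the mediant inequality $\tfrac{a+a'}{b+b'}\ge\min\{a/b,\,a'/b'\}$, whereas you linearize via the gap $\Delta_n$ and cancel the common continuation term; the two manipulations are algebraically equivalent.
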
 
\begin{proof}
See the Appendix.
\end{proof} 
\begin{figure}[htpb]
\centering
  \begin{psfrags}
    \psfrag{A}[t]{$p$}
	\psfrag{B}[b]{$LB_1$}
	\psfrag{P}[b]{$LB$}  
	\psfrag{C}[l]{\scriptsize{$e_n =5$}}
	\psfrag{D}[l]{\scriptsize{$e_n =55$}}
	\psfrag{E}[l]{\scriptsize{$e_n =25$}}
	
 	\psfrag{F}[l]{\large{$N-n =25$}}
\subfloat[]{
  \includegraphics[scale=0.23]{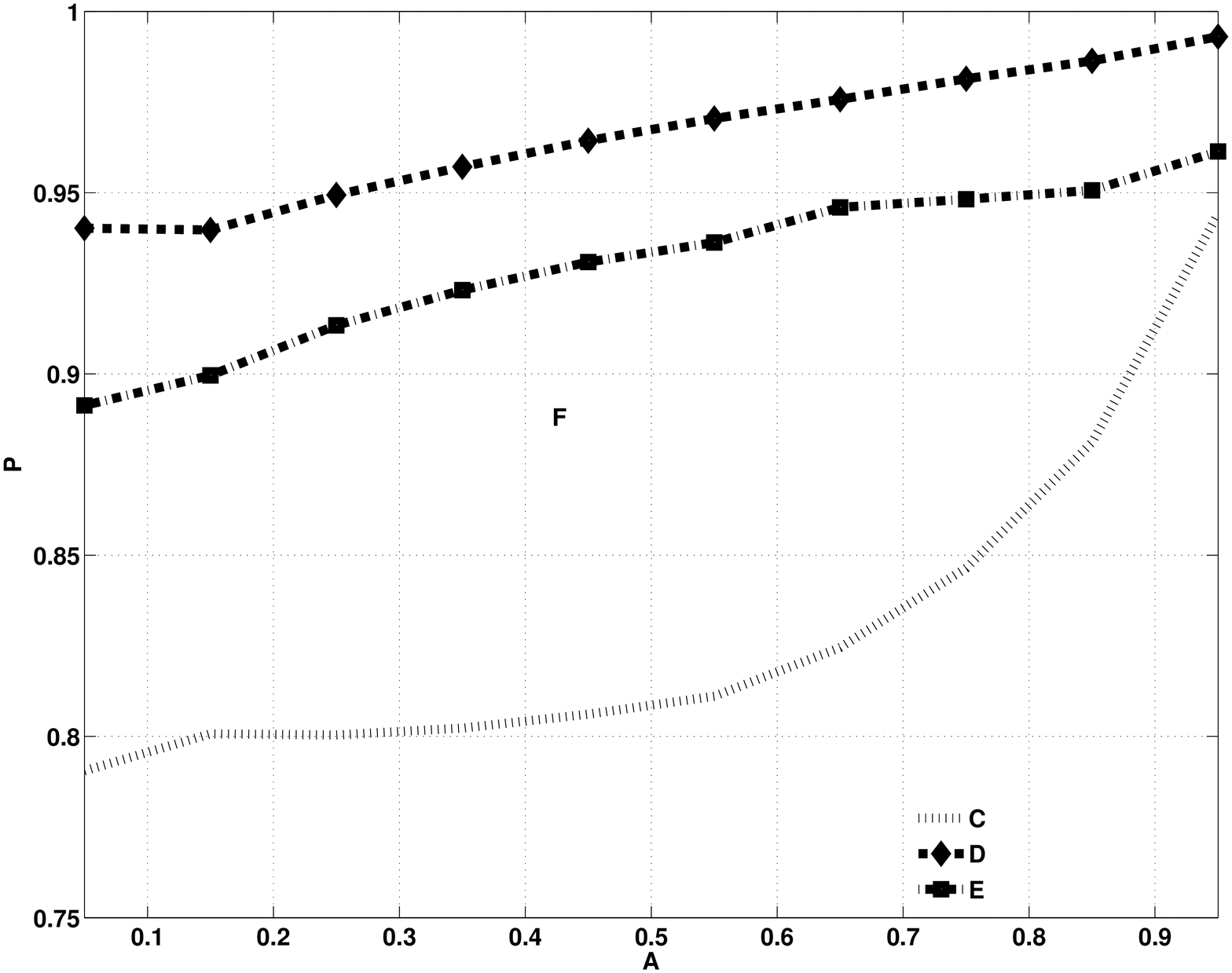}
}
\psfrag{F}[l]{\large{$N-n =5$}} 
\subfloat[]{
  \includegraphics[scale=0.23]{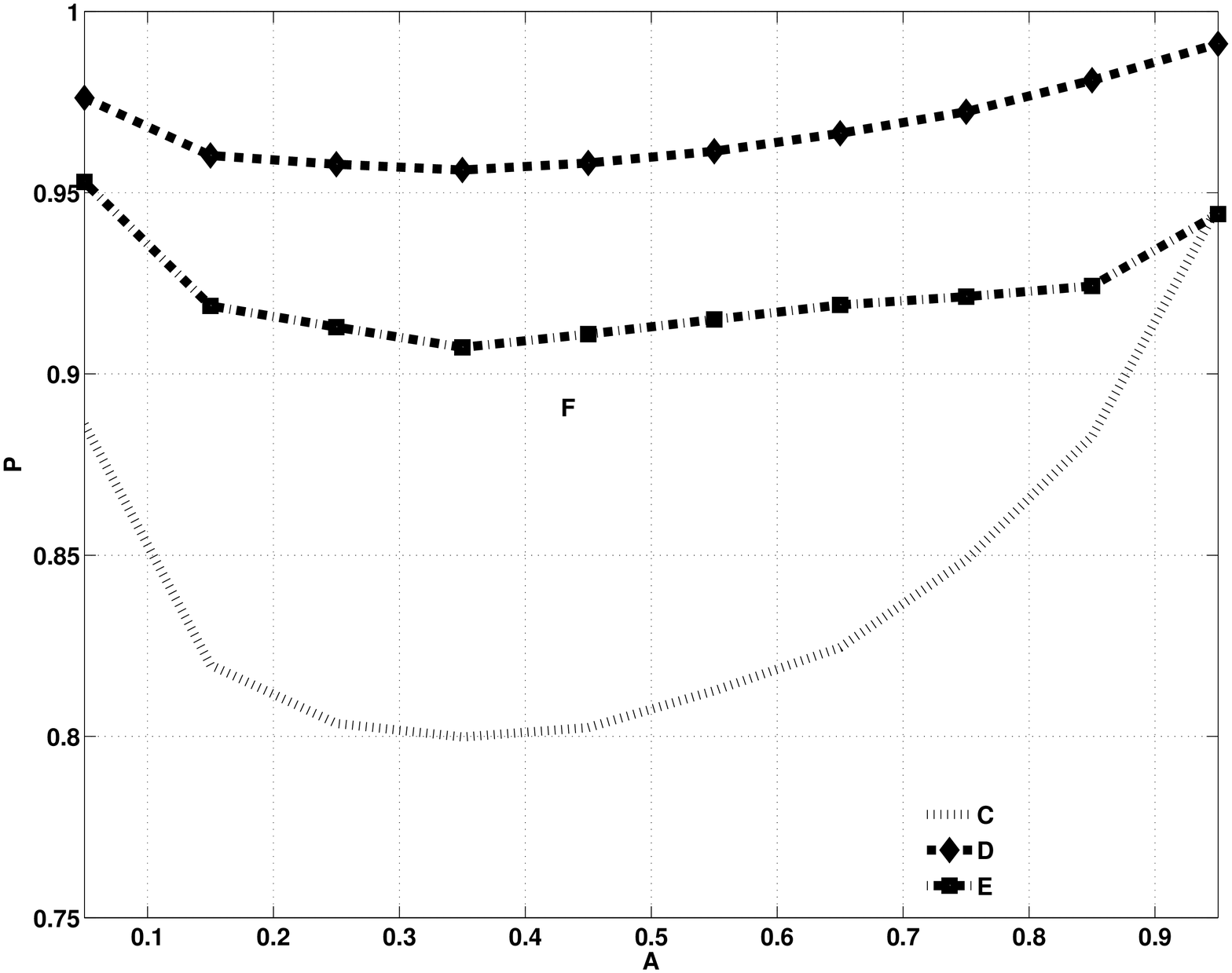}
} 	
    \end{psfrags}  
\caption{The lower bounds of the immediate fill metric  at (a) $N-n=25$ (b) $N-n=5$    for $\rho_{n}=E[\tilde{\rho}_{n}^{*}]$ policy where $\lbrace H_n \rbrace$ is a Bernoulli process with $\Pr(H_n=0)=1-p$ and $\Pr(H_n=24\text{ units})=p$.}
\label{r255} 
\end{figure}
Next section considers stochastic offline optimal decisions in a simpler case, namely static channel case, in order to demonstrate how simple bounds on immediate fill can be found and the distribution of offline optimal decisions can be characterized. 
\subsection{Results on the Static Channel Case}
In this section, we focus on the case where the channel is static, i.e. $\gamma_{n}=1$ for all $n$,  and the data buffer is always full, i.e. $b_{n}=\infty$ for all $n$. Accordingly, the online power level and the offline optimal power level can be represented by 
$\rho_{n}= w_{n}-1$ and $\tilde{\rho}_{n}^{*}= \tilde{w}_{n}^{*}-1$. Then, the offline optimal power level at slot $n$ can be expressed as:
\begin{equation}
\tilde{\rho}_{n}^{*} =\displaystyle\min_{u=0,...,(N-n)} \frac{e_{n}+ \displaystyle\sum_{l=n+1}^{n+u} H_{l}}{u+1}
\end{equation}
\begin{prop}
\label{thm:immediateprop}
Assuming that the the channel is static, i.e. $\gamma_{n}=1$ for all $n$,  and the data buffer is always full, i.e. $b_{n}=\infty$ for all $n$, the immediate fill is lower bounded as follows:
\[
\mu_{n}^{w}(x_{n},\theta_{n})\geq \frac{\ln(1+\rho_{n})}{E\left[ \ln(1+\tilde{\rho}_{n}^{*})\right] +E\left[\frac{(\rho_{n}-\tilde{\rho}_{n}^{*})^{+}}{1+\tilde{\rho}_{n+1}^{\triangleright}}\right] }
\]   
where  $\tilde{\rho}_{n+1}^{\triangleright}$ is the offline optimal decision at slot $n+1$ after the decision $\rho_{n}$ is made.
\end{prop}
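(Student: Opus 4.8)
The plan is to reduce the claimed bound to an elementary comparison of the two denominators and then to control the immediate loss through a single marginal-energy estimate coming from the concavity of the offline value function.

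First I would rewrite the immediate fill in natural-log form. In the static full-buffer case $w_n=1+\rho_n$ and $\gamma_n=1$, so the immediate gain is $(\log_2(w_n\gamma_n))^+=\log_2(1+\rho_n)=\ln(1+\rho_n)/\ln 2$. Writing $L_n:=E_{\psi_n}[F_n(\tilde w_n^*,w_n)\mid x_n,\theta_n]$ for the immediate loss and multiplying the numerator and denominator of $\mu_n^w$ by $\ln 2$ gives
\[
\mu_n^w(x_n,\theta_n)=\frac{\ln(1+\rho_n)}{\ln(1+\rho_n)+L_n\ln 2}.
\]
The numerator here already matches the numerator of the asserted lower bound and is positive whenever $\rho_n>0$ (the case $\rho_n=0$ is trivial since both sides vanish). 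Using the monotonicity of $a\mapsto A/(A+a)$, the claim is therefore equivalent to the denominator comparison
\[
\ln(1+\rho_n)+L_n\ln 2\;\le\;E\!\left[\ln(1+\tilde\rho_n^*)\right]+E\!\left[\frac{(\rho_n-\tilde\rho_n^*)^+}{1+\tilde\rho_{n+1}^{\triangleright}}\right].
\]

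Next I would expand $L_n$. By the tower property $E_{\psi_n}[F_n]=E_{\psi_n}[\tilde F_n]$, and in the static full-buffer case, with $\gamma_n=1$, the data constraint never binds and $\tilde V_{n+1\mid\theta_{n+1}}^*$ depends on its argument only through the slot-$(n+1)$ energy. Setting $\bar e_{n+1}:=e_n+H_{n+1}-\tilde\rho_n^*$ (the energy under the offline-optimal action) and $\delta:=\rho_n-\tilde\rho_n^*$, the online action leaves energy $\bar e_{n+1}-\delta$, so $\tilde F_n=\log_2(1+\tilde\rho_n^*)-\log_2(1+\rho_n)+\Delta V$ with $\Delta V:=\tilde V_{n+1\mid\theta_{n+1}}^*(\bar e_{n+1})-\tilde V_{n+1\mid\theta_{n+1}}^*(\bar e_{n+1}-\delta)$. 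Substituting $L_n\ln 2=E[\ln(1+\tilde\rho_n^*)]-\ln(1+\rho_n)+\ln 2\,E[\Delta V]$ into the denominator comparison, the $\ln(1+\rho_n)$ and $E[\ln(1+\tilde\rho_n^*)]$ terms cancel, leaving the single pointwise inequality
\[
\ln 2\,\cdot\,\Delta V\;\le\;\frac{(\rho_n-\tilde\rho_n^*)^+}{1+\tilde\rho_{n+1}^{\triangleright}}
\]
to be established for each realization of $\psi_n$.

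To prove it I would split on the sign of $\delta$. If $\rho_n\le\tilde\rho_n^*$, the online action leaves at least as much energy for slot $n+1$, so by monotonicity of $\tilde V_{n+1\mid\theta_{n+1}}^*$ in energy $\Delta V\le 0$ while the right-hand side is $0$. For $\delta>0$ I would use two structural facts about $\tilde V_{n+1\mid\theta_{n+1}}^*$ as a function of energy: (i) it is concave, being the optimal value of the concave objective $\sum_l\log_2(1+\rho_l)$ over the energy polytope whose defining constraints are linear in the initial energy; and (ii) its left derivative at energy $e$ is the marginal throughput per unit energy, namely $\tfrac{1}{(1+\tilde\rho_{n+1}^{\triangleright}(e))\ln 2}$, since $\tfrac{d}{d\rho}\log_2(1+\rho)=\tfrac{1}{(1+\rho)\ln 2}$ (consistent with Remark~\ref{derivativeremark}) and, by the non-decreasing water-level structure of EE-TM-OFF schedules in Theorem~\ref{optimalwater}, a marginal unit of energy available at slot $n+1$ is optimally spent in the first (lowest-power) block, whose power is $\tilde\rho_{n+1}^{\triangleright}$. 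Evaluating at the reduced energy actually present after applying $\rho_n$, concavity then gives
\[
\Delta V=\int_{\bar e_{n+1}-\delta}^{\,\bar e_{n+1}}\big(\tilde V_{n+1\mid\theta_{n+1}}^{*}\big)'(e)\,de\;\le\;\delta\,\big(\tilde V_{n+1\mid\theta_{n+1}}^{*}\big)'\!\big(\bar e_{n+1}-\delta\big)=\frac{\delta}{(1+\tilde\rho_{n+1}^{\triangleright})\ln 2},
\]
and multiplying by $\ln 2$ yields exactly the required pointwise bound; taking expectations finishes the argument.

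The main obstacle is fact (ii): rigorously identifying the derivative of the offline optimal value with respect to initial energy and tying it to the first-block power $\tilde\rho_{n+1}^{\triangleright}$. This requires the water-filling/block decomposition implicit in Theorem~\ref{optimalwater} together with an envelope-type argument, and some care at block boundaries where $\tilde V_{n+1\mid\theta_{n+1}}^*$ is only piecewise differentiable—handled cleanly by using the left derivative, which concavity guarantees dominates the average slope over $[\bar e_{n+1}-\delta,\bar e_{n+1}]$. Everything else is bookkeeping and the elementary monotonicity of $a\mapsto A/(A+a)$.
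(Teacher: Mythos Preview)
Your argument is correct and reaches the same inequality as the paper, but the route is genuinely different. The paper expands the future throughput gap explicitly as
\[
\xi(\rho_n)=\sum_{k=n+1}^{N}\log_2\!\Big(1+\tfrac{\tilde\rho_k^{*}-\tilde\rho_k^{\triangleright}}{1+\tilde\rho_k^{\triangleright}}\Big),
\]
uses energy conservation $\sum_k(\tilde\rho_k^{*}-\tilde\rho_k^{\triangleright})=\rho_n-\tilde\rho_n^{*}$, replaces every denominator by the smallest one $1+\tilde\rho_{n+1}^{\triangleright}$ via water-level monotonicity, applies Jensen to equalize the $\Delta_k$'s, and finally sends $N\to\infty$ (equivalently uses $\log_2(1+x)\le x/\ln 2$) to obtain $\xi(\rho_n)\le \tfrac{(\rho_n-\tilde\rho_n^*)^+}{(1+\tilde\rho_{n+1}^{\triangleright})\ln 2}$. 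You instead work with the offline value function $\tilde V_{n+1}^{*}(\cdot)$ directly: concavity in the initial energy plus the envelope identification $(\tilde V_{n+1}^{*})'(e)=\tfrac{1}{(1+\tilde\rho_{n+1}^{\triangleright}(e))\ln 2}$ immediately give the same pointwise bound on $\Delta V$.

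What each approach buys: your version is cleaner and sidesteps both the artificial $N\to\infty$ step and the intermediate monotonicity claim $\tilde\rho_k^{\triangleright}\le\tilde\rho_k^{*}$ for every $k>n$, at the price of having to justify the envelope identity (ii). The paper's version is more elementary and fully explicit, needing only the non-decreasing water-level property from Theorem~\ref{optimalwater} and standard concavity of $\log$. Your identification of the main obstacle is accurate; note that in the static full-buffer case $\tilde\rho_{n+1}^{\triangleright}(e)=\min_{u}\tfrac{e+\sum_{l}H_l}{u+1}$ is a pointwise minimum of affine functions of $e$, hence continuous, so $\tilde V_{n+1}^{*}$ is actually $C^1$ and the left/right-derivative caveat is not really needed here.
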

\begin{proof}
See the Appendix.
\end{proof}
\begin{prop}
\label{immediatereprop} 
Let $\mu_{n}^{\check{w}}(x_{n},\theta_{n})$ represent the maximum (achievable) immediate fill at slot $n$,
i.e,  $\mu_{n}^{\check{w}}(x_{n},\theta_{n})= \displaystyle\max_{w_{n} \in A(x_{n})}\mu_{n}^{w}(x_{n},\theta_{n})$. Then, the inequality below should hold:
\[
\mu_{m}^{\check{w}}(x_{m},\theta_{m}) \geq \frac{1}{ 1 + \frac{E\left[\frac{(E[\tilde{\rho}_{n}^{*}]-\tilde{\rho}_{n}^{*})^{+}}{1+\tilde{\rho}_{n+1}^{\triangleright}}\right]}{\ln(1+E[\tilde{\rho}_{n}^{*}])}}, (LB)
\]
and it can be simplified as in the following:
\[
\mu_{m}^{\check{w}}(x_{m},\theta_{m}) \geq \frac{1}{ 1 + \frac{E\left[(E[\tilde{\rho}_{n}^{*}]-\tilde{\rho}_{n}^{*})^{+}\right]}{\ln(1+E[\tilde{\rho}_{n}^{*}])}}
\]
which implies:
\[
\mu_{m}^{\check{w}}(x_{m},\theta_{m}) \geq \frac{1}{ 1 + \frac{\sqrt{Var\left(\tilde{\rho}_{n}^{*}\right)}}{\ln(1+E[\tilde{\rho}_{n}^{*}])}},
\]
\end{prop}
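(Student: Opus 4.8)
The plan is to exploit that $\mu_{n}^{\check{w}}$ is the \emph{maximum} of the immediate fill over all admissible decisions, so that evaluating the bound of Proposition~\ref{thm:immediateprop} at any single admissible $w_{n}$ already produces a valid lower bound on $\mu_{n}^{\check{w}}$. The natural evaluation point---consistent with the $\rho_{n}=E[\tilde{\rho}_{n}^{*}]$ policy used in Figure~\ref{r255}---is to set the online power equal to the mean offline optimal power, $\rho_{n}=E[\tilde{\rho}_{n}^{*}]$. First I would verify admissibility: the $u=0$ term of the minimum defining $\tilde{\rho}_{n}^{*}$ equals $e_{n}$, so $\tilde{\rho}_{n}^{*}\le e_{n}$ surely, whence $E[\tilde{\rho}_{n}^{*}]\le e_{n}$ and this choice lies in $A(x_{n})$. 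Substituting it into Proposition~\ref{thm:immediateprop} gives
\[
\mu_{n}^{\check{w}}(x_{n},\theta_{n})\ge \frac{\ln(1+E[\tilde{\rho}_{n}^{*}])}{E[\ln(1+\tilde{\rho}_{n}^{*})]+E\!\left[\frac{(E[\tilde{\rho}_{n}^{*}]-\tilde{\rho}_{n}^{*})^{+}}{1+\tilde{\rho}_{n+1}^{\triangleright}}\right]}.
\]

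Next I would pass to the $(LB)$ form by controlling the first denominator term with Jensen's inequality. Since $x\mapsto\ln(1+x)$ is concave, $E[\ln(1+\tilde{\rho}_{n}^{*})]\le\ln(1+E[\tilde{\rho}_{n}^{*}])$, so replacing $E[\ln(1+\tilde{\rho}_{n}^{*})]$ by the larger $\ln(1+E[\tilde{\rho}_{n}^{*}])$ enlarges the denominator and can only decrease the fraction. Dividing numerator and denominator by $B:=\ln(1+E[\tilde{\rho}_{n}^{*}])$ rewrites the result as $1/(1+A/B)$ with $A:=E[(E[\tilde{\rho}_{n}^{*}]-\tilde{\rho}_{n}^{*})^{+}/(1+\tilde{\rho}_{n+1}^{\triangleright})]$, which is exactly $(LB)$.

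The two remaining inequalities are monotone relaxations of the numerator $A$; each enlarges $A$, hence enlarges $1+A/B$, hence weakens the bound, which is what ``implies'' requires. For the first simplification, $\tilde{\rho}_{n+1}^{\triangleright}\ge 0$ gives $1/(1+\tilde{\rho}_{n+1}^{\triangleright})\le 1$, so $A\le E[(E[\tilde{\rho}_{n}^{*}]-\tilde{\rho}_{n}^{*})^{+}]$, producing the second displayed bound. For the last, set $Y:=(E[\tilde{\rho}_{n}^{*}]-\tilde{\rho}_{n}^{*})^{+}$; Cauchy--Schwarz gives $E[Y]\le\sqrt{E[Y^{2}]}$, and since $Y^{2}\le(E[\tilde{\rho}_{n}^{*}]-\tilde{\rho}_{n}^{*})^{2}$ we get $E[Y^{2}]\le Var(\tilde{\rho}_{n}^{*})$, hence $E[Y]\le\sqrt{Var(\tilde{\rho}_{n}^{*})}$, which yields the final variance form.

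I do not expect a hard step: the argument is a chain of monotone relaxations layered on top of Proposition~\ref{thm:immediateprop}. The two points requiring care are the admissibility of the evaluation point $\rho_{n}=E[\tilde{\rho}_{n}^{*}]$, which rests on $\tilde{\rho}_{n}^{*}\le e_{n}$, and keeping every inequality oriented so that each relaxation \emph{increases} the denominator $1+A/B$ and therefore genuinely weakens the bound rather than strengthening it.
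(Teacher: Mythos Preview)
Your argument is correct and matches the paper's approach: evaluate Proposition~\ref{thm:immediateprop} at $\rho_{n}=E[\tilde{\rho}_{n}^{*}]$, apply Jensen to $\ln(1+\cdot)$, drop the factor $1/(1+\tilde{\rho}_{n+1}^{\triangleright})\le 1$, and bound $E[(E[\tilde{\rho}_{n}^{*}]-\tilde{\rho}_{n}^{*})^{+}]$ by $\sqrt{Var(\tilde{\rho}_{n}^{*})}$. The only cosmetic differences are that the paper simplifies away $1/(1+\tilde{\rho}_{n+1}^{\triangleright})$ before substituting (so $(LB)$ is left implicit) and reaches the variance bound via $(x)^{+}\le|x|$ followed by Jensen on $\sqrt{\cdot}$ rather than your Cauchy--Schwarz route; these are equivalent, and your explicit admissibility check is a nice addition the paper omits.
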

\begin{proof}
See the Appendix.
\end{proof} 
In Fig. \ref{r255}, the lower bound $LB$ in Proposition \ref{immediatereprop} is plotted against varying arrival probabilities of a Bernoulli energy harvesting process at different system states of energy level $e_{n}$ and remaining number of slots $N-n$. 

Next, we consider the CDF of $\tilde{\rho}_{n}^{*}$ under Bernoulli energy harvesting assumption and characterize it for large $N$, i.e. as $N-n$ goes to infinity.
\begin{theorem}
\label{thm:ccdfrho}
Let $\lbrace H_n \rbrace$ be a Bernoulli process with $\Pr(H_n=0)=1-p$ and $\Pr(H_n=h)=p$. Then, 

(i) \[
\displaystyle\lim_{ x \rightarrow +\infty } \Pr(\tilde{\rho}_{n}^{*}< r \mid e_n = x) = 0
\]
and
\[
\displaystyle\lim_{ x \rightarrow r^{+} } \Pr(\tilde{\rho}_{n}^{*}< r \mid e_n = x) = 1
\]
(ii) for $m \in \mathbb {N}^{+}$ and $\frac{h}{m}  < e_{n}$,
\[
\displaystyle\lim_{ N \rightarrow +\infty } \Pr(\tilde{\rho}_{n}^{*}< \frac{h}{m} \mid e_n = x) = \Phi(m)^{\lfloor\frac{mx}{h}\rfloor} 
\]
where $\Phi(m)$ function is the minimum value in $(0,1]$ satisfying the following equation:
\[
p\Phi(m)^{m}-\Phi(m)+1-p=0
\]
\end{theorem}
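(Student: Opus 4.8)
The plan is to reduce the event $\{\tilde\rho_n^* < r\}$ to a first-passage event for an integer-valued random walk and then exploit a \emph{skip-free} structure. Write $K_u$ for the number of size-$h$ arrivals among slots $n+1,\dots,n+u$, so that the harvested energy in the first $u$ slots is $hK_u$. Using the closed form for $\tilde\rho_n^*$ in the static, full-buffer case,
\[
\tilde\rho_n^* < \tfrac{h}{m} \iff \exists\, u\ge 0:\ \frac{e_n + hK_u}{u+1} < \frac{h}{m} \iff \exists\, u\ge 0:\ u - mK_u > \frac{m e_n}{h} - 1 .
\]
Introduce the walk $Y_u = u - mK_u$ with $Y_0=0$ and i.i.d. increments equal to $+1$ with probability $1-p$ and $1-m$ with probability $p$. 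Because the increments are integer valued with maximal upward jump $+1$, the walk is \emph{skip-free upward}; and since $Y$ is integer valued, the condition becomes $\sup_{u\ge0} Y_u \ge k$ with $k=\lfloor m e_n/h\rfloor$, where the hypothesis $h/m < e_n$ guarantees $k\ge 1$. Hence $\Pr(\tilde\rho_n^* < h/m \mid e_n=x)$ equals the probability that $Y$ ever reaches level $k$.

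For part (ii) the core step is the multiplicative law for hitting probabilities. Let $q=\Phi(m)$ be the probability that $Y$, started at $0$, ever reaches level $1$. Skip-freeness upward forces the walk to pass through every intermediate integer on its way up, so by the strong Markov property and spatial homogeneity the probability of ever reaching level $k$ factorizes as $q^k$. A first-step decomposition then pins down $q$: after one step the walk is at $+1$ (already at level $1$) with probability $1-p$, or at $1-m$ with probability $p$, from which reaching level $1$ requires climbing $m$ levels and hence has probability $q^m$. This gives $q=(1-p)+pq^m$, i.e. $pq^m - q + 1-p=0$; among the roots in $(0,1]$ the hitting probability is the \emph{smallest} one (the standard selection for first-passage/extinction fixed points — note $q=1$ is always a root, while $q=0$ is not since the left side equals $1-p>0$ there), which is exactly the defining property of $\Phi(m)$. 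Finally, for finite horizon the supremum runs over $u\le N-n$, and as $N\to\infty$ the events $\{\sup_{u\le N-n}Y_u\ge k\}$ increase to $\{\sup_{u\ge0}Y_u\ge k\}$, so monotone convergence yields $\lim_{N\to\infty}\Pr(\tilde\rho_n^* < h/m\mid e_n=x)=q^{\lfloor m e_n/h\rfloor}=\Phi(m)^{\lfloor m e_n/h\rfloor}$.

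For part (i) I would use the same representation with a general threshold $r$, namely $\{\tilde\rho_n^* < r\}=\{e_n < \sup_{u\ge0}[\,r(u+1)-hK_u\,]\}$. The first limit is immediate from the lower bound $\tilde\rho_n^*\ge e_n/(N-n+1)$ (every term in the minimum is at least $e_n/(u+1)$), which sends $\tilde\rho_n^*\to\infty$ as $e_n\to\infty$ and hence $\Pr(\tilde\rho_n^*<r)\to 0$; equivalently, under negative drift the running maximum on the right-hand side is a.s. finite, so the exceedance probability vanishes. For the second limit, as $e_n\downarrow r$ the $u=0$ term already gives $\tilde\rho_n^*\le e_n\to r$, and a partial average dropping below $r$ (e.g. a harvest-free stretch), an event whose probability tends to one, drives the minimum strictly below $r$; hence $\Pr(\tilde\rho_n^*<r)\to 1$.

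The main obstacle I expect is part (ii): justifying the multiplicative law $q^k$ rigorously from skip-freeness rather than merely asserting it, and arguing that the relevant root of $pq^m-q+1-p=0$ is the minimal one in $(0,1]$ and not the trivial root $q=1$. By contrast, the reduction to the walk $Y_u$ and the floor bookkeeping $k=\lfloor m e_n/h\rfloor$, as well as the boundary limits in part (i), are routine once the skip-free viewpoint is in place.
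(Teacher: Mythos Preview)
Your proof of part (ii) is correct and takes a genuinely different route from the paper. The paper proceeds by writing a one-step recursion for the CDF itself,
\[
\Pr\!\Big(\tilde\rho_n^*<\tfrac{h}{m}\,\Big|\,e_n=x\Big)=(1-p)\Pr\!\Big(\tilde\rho_{n+1}^*<\tfrac{h}{m}\,\Big|\,e_{n+1}=x-\tfrac{h}{m}\Big)+p\Pr\!\Big(\tilde\rho_{n+1}^*<\tfrac{h}{m}\,\Big|\,e_{n+1}=x-\tfrac{h}{m}+h\Big),
\]
expands the CDF as a step function $\sum_j a_{j,n}\,\mathrm{rect}(mx/h-\tfrac12-j)$, derives the coefficient recurrence $a_{j,n}=(1-p)a_{j-1,n+1}+pa_{j+m-1,n+1}$, and argues that as $N\to\infty$ the coefficients become stationary and geometric, $a_j=\Phi(m)^j$. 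You instead recognize $\{\tilde\rho_n^*<h/m\}$ as the first-passage event $\{\sup_u Y_u\ge \lfloor me_n/h\rfloor\}$ for the skip-free-upward walk $Y_u=u-mK_u$, obtain the exponent $\lfloor me_n/h\rfloor$ directly from multiplicativity of hitting probabilities, and get the polynomial $pq^m-q+1-p=0$ from a first-step decomposition. Your argument is shorter and explains \emph{why} the answer is geometric (skip-freeness forces the walk through every intermediate level), whereas the paper's computation is more self-contained but hides this structure. Both approaches land on the same selection of the minimal root in $(0,1]$; your justification via the standard minimal-fixed-point argument for first-passage probabilities is equivalent to the paper's observation that $0<a_{j,n}<1$ forces the nontrivial root when one exists.

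For part (i) your treatment is at the same level of rigor as the paper's: the first limit via $\tilde\rho_n^*\ge e_n/(N-n+1)$ matches the paper exactly, and the second limit via $\tilde\rho_n^*\le e_n$ plus a ``harvest-free stretch'' is essentially the paper's one-line remark. Neither version makes the second limit fully precise (for fixed finite $N$ the event that \emph{every} future slot harvests keeps $\tilde\rho_n^*=e_n>r$ with probability $p^{N-n}>0$), so if you want to tighten this you should either state part (i) in the $N\to\infty$ regime or quantify the residual probability.
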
 
\begin{proof}
See the Appendix.
\end{proof} 
\begin{figure}[htpb]
\centering
  \begin{psfrags}
    \psfrag{A}[b]{$r$}
	\psfrag{B}[t]{$\Pr(\tilde{\rho}_{n}^{*}< r \mid e_{n} = x)$}
	\psfrag{C}[l]{\scriptsize{Simulated CDF}}
	\psfrag{D}[l]{\scriptsize{Computed CDF}}
    \centering \includegraphics[scale=0.26]{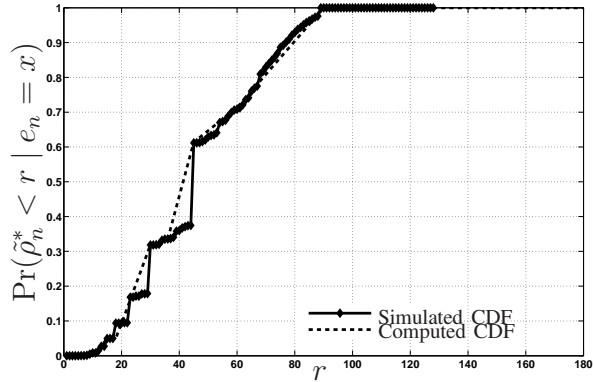}
    \end{psfrags}  
\caption{A comparison of Monte Carlo simulated CDF of $\tilde{\rho}_{n}^{*}$ at $N-n=99$ versus the CDF  of $\tilde{\rho}_{n}^{*}$ computed for $N-n \rightarrow +\infty$ using the result in Theorem \ref{thm:ccdfrho}  where $e_{n}=88$ and $\lbrace H_n \rbrace$ is a Bernoulli process with $\Pr(H_n=0)=0.55$ and $\Pr(H_n=180\text{ units})=0.45$.}
\vspace{-0.099in}
\label{sample12} 
\end{figure}
\subsection{Online Heuristic}
\label{sec:onlinepolicies}
The online problem formulation in the previous sections assumes statistical information on exogeneous processes energy harvesting, packet arrival and channel fading. Then, the offline optimal decisions take these processes  as their inputs in Eq. (\ref{eq:eqwe}) and Eq. (\ref{eq:eqwb}). 

Alternatively, a heuristic policy could use Eq. (\ref{eq:eqwe}) and Eq. (\ref{eq:eqwb}) with estimated values of $\displaystyle\sum_{l=n+1}^{n+u} H_{l}$, $\displaystyle\sum_{l=n+1}^{n+u} B_{l}$, $\displaystyle\sum_{l=n}^{n+u}  M_{l}^{(e)}(w_{n})$ and $\displaystyle\sum_{l=n}^{n+u}  M_{l}^{(b)}(w_{n})$. We propose such a policy where $\displaystyle\sum_{l=n+1}^{n+u} H_{l}$, $\displaystyle\sum_{l=n+1}^{n+u} B_{l}$, $\displaystyle\sum_{l=n}^{n+u}  M_{l}^{(e)}(w_{n})$ and $\displaystyle\sum_{l=n}^{n+u}  M_{l}^{(b)}(w_{n})$ are estimated through observed time averages giving  the estimated values of $w_{n}^{e}$ and $w_{n}^{b}$ as follows:
\begin{equation}
\hat{w}_{n}^{e}= \left\{ \begin{array}{ll}
       \frac{e_{n}-\bar{H}_{n}}{(N-n)}+\bar{H}_{n}+\bar{M}_{n}^{(e)}(w_{n})  & \mbox{; $e_{n} \geq \bar{H}_{n}$}\\
       e_{n}+\bar{M}_{n}^{(e)}(w_{n})& \mbox{; o.w.  }\\
        \end{array} \right. 
\end{equation}
\begin{equation}
\log_{2}(\hat{w}_{n}^{b})= \left\{ \begin{array}{ll}
       \frac{2(b_{n}-\bar{B}_{n})}{(N-n)}+\bar{B}_{n}+\bar{M}_{n}^{(b)}(w_{n})  & \mbox{; $b_{n} \geq \bar{B}_{n}$}\\
       2b_{n}+\bar{M}_{n}^{(b)}(w_{n})& \mbox{; o.w.  }\\
        \end{array} \right. 
\end{equation}
where
\vspace{-0.099in}
\[
\bar{H}_{n}=\frac{1}{n}\displaystyle\sum_{l=1}^{n} H_{l}, \bar{B}_{n}=\frac{1}{n}\displaystyle\sum_{l=1}^{n} B_{l}
\]
\vspace{-0.099in}
\[
\bar{M}_{n}^{(e)}(w_{n})=\frac{1}{n}\displaystyle\sum_{l=1}^{n} M_{l}^{(e)}(w_{n}),\bar{M}_{n}^{(b)}(w_{n})=\frac{1}{n}\displaystyle\sum_{l=1}^{n} M_{l}^{(b)}(w_{n})
\]
The estimate of the throughput maximizing water level can be computed iteratively:
\[
\hat{w}_{n}^{(k+1)}=\vert_{w_{n}=\hat{w}_{n}^{(k)}}\min \left\lbrace \hat{w}_{n}^{e},\hat{w}_{n}^{b}\right\rbrace 
\]
where $\hat{w}_{n}^{(k)}$ is the $k$th iteration of the estimated value of throughput maximizing water level and $\hat{w}_{n}^{(1)}=\min \left\lbrace e_{n},2^{2b_{n}}\right\rbrace $.
\section{Numerical Study of the Online vs Offline Policies}

The purpose of the numerical study is to compare the online heuristic with the offline optimal policy, under Markovian arrival processes. For the packet arrival process, a Markov model having two states as no packet arrival state and a packet arrival of constant size 10 KB per slot state with transition probabilities $q_{00}=0.9$, $q_{01}=0.1$, $q_{10}=0.58$, $q_{11}=0.42$ where slot duration is $1$ms and the transmission window is $N=100$ slots. Gilbert-Elliot channel is assumed where good ($\gamma^{good}=30$) and bad ($\gamma^{bad}=12$) states appear with equal probabilities, i.e. $P(\gamma_{n}=\gamma^{good})=P(\gamma_{n}=\gamma^{bad})=0.5$. Similarly, in energy harvesting process, energy harvests of $50$nJs are assumed to occur with a probability of $0.5$ at each slot.

For a typical sample realization of packet arrival, energy harvesting and channel fading processes, water level profiles of throughput maximizing offline optimal policy and online heuristic policy are shown in Fig. \ref{sample12} (a) and (b). Fig. \ref{sample12} (a) shows water level profiles when transmission window size
$N$ is set to $100$ slots and Fig. \ref{sample12} (b) shows water level profiles when transmission window size is extended to $200$ slots. In the first $100$ slot, water level profiles are similar to each other though, due to the relaxation of the deadline constraint, both optimal and heuristic water levels sligthly decrease when transmission window size is doubled. 

To illustrate the effect of transmission window size, average throughput performances and energy consumption of throughput maximizing offline optimal policy and online heuristic are compared against varying transmission window size in Fig. \ref{avwthroughput} (a) and (b), respectively. The average performances of both offline optimal policy and online heuristic tend to saturate as transmission window size increases beyond $100$ slots. The experiment is repeated in Fig. \ref{avwthroughputmemory}, for the case where energy harvesting process has a memory remaning in the same state with $0.9$ probability and switching to other state with probability $0.1$.

 In Fig. \ref{fig:maxthroughputcomppowerhalv}, our online heuristic is compared with the ``Power-Halving" policy proposed in ~\cite{6202352}.The power-halving policy basically operates as follows: in each slot except the last one, it keeps half the stored energy in the battery, and uses the other half. It has been shown in ~\cite{6202352}, the average throughput performance of the ``Power-Halving" policy can reach  $\%80-\%90$ of average throughput of offline optimal policy. On the other hand, our online heuristic proposed in this paper uses casual information on energy-data arrivals and channels states to achieve average throughput rate much closer to offline optimal average throughput rates.

Note that the parameters of Markov processes have been arbitrarily chosen as it is hard to cover a wide range of possible settings. 
\begin{figure}[htpb]
\centering
  \begin{psfrags}
    \psfrag{A}[b]{Power (in mW)}
	\psfrag{B}[t]{Window Size (in number of slots)}
	\psfrag{C}[l]{\scriptsize{Offline Optimal}}
	\psfrag{D}[l]{\scriptsize{Online Heuristic}}
\subfloat[]{
  \includegraphics[scale=0.26]{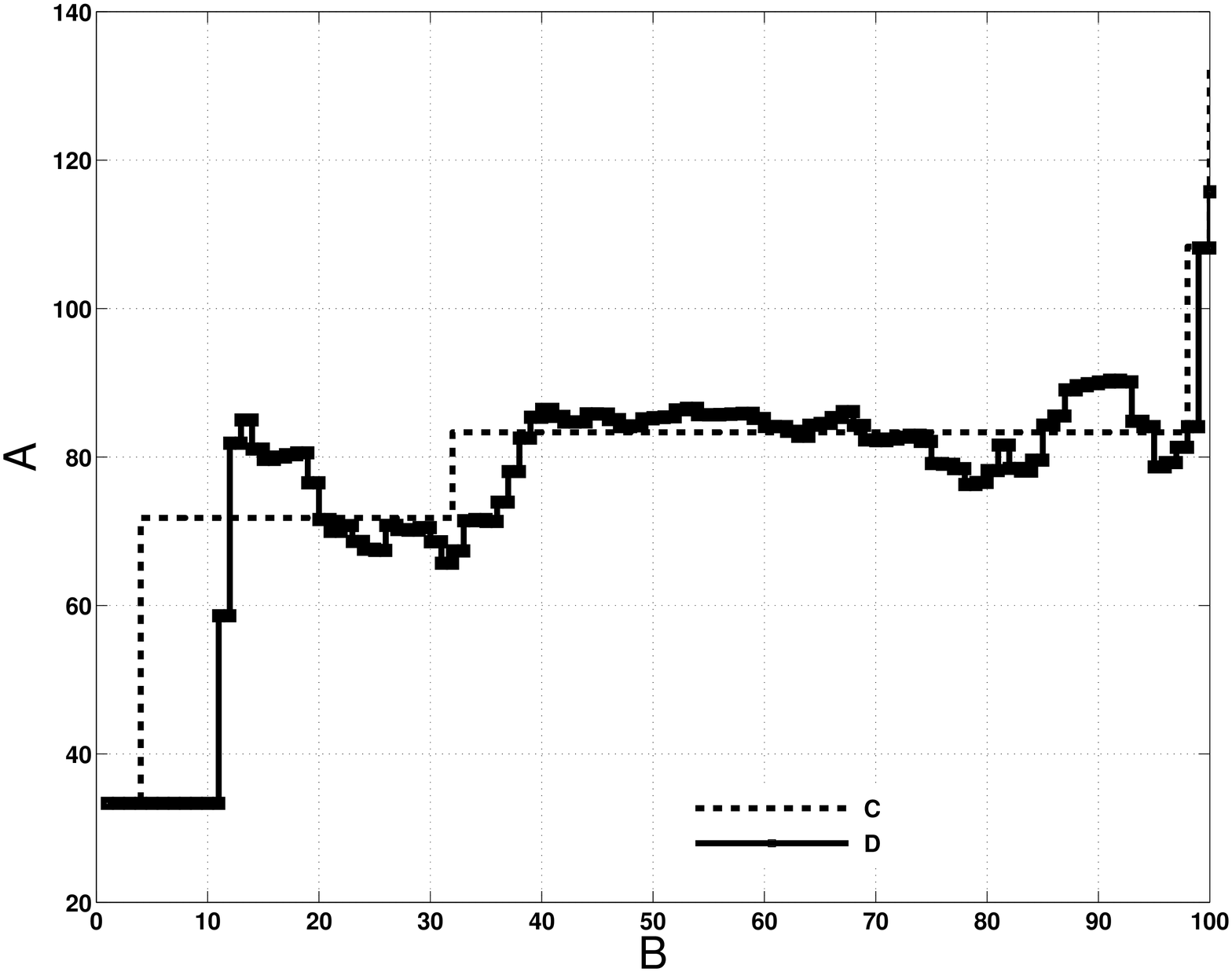}
}
\subfloat[]{
  \includegraphics[scale=0.26]{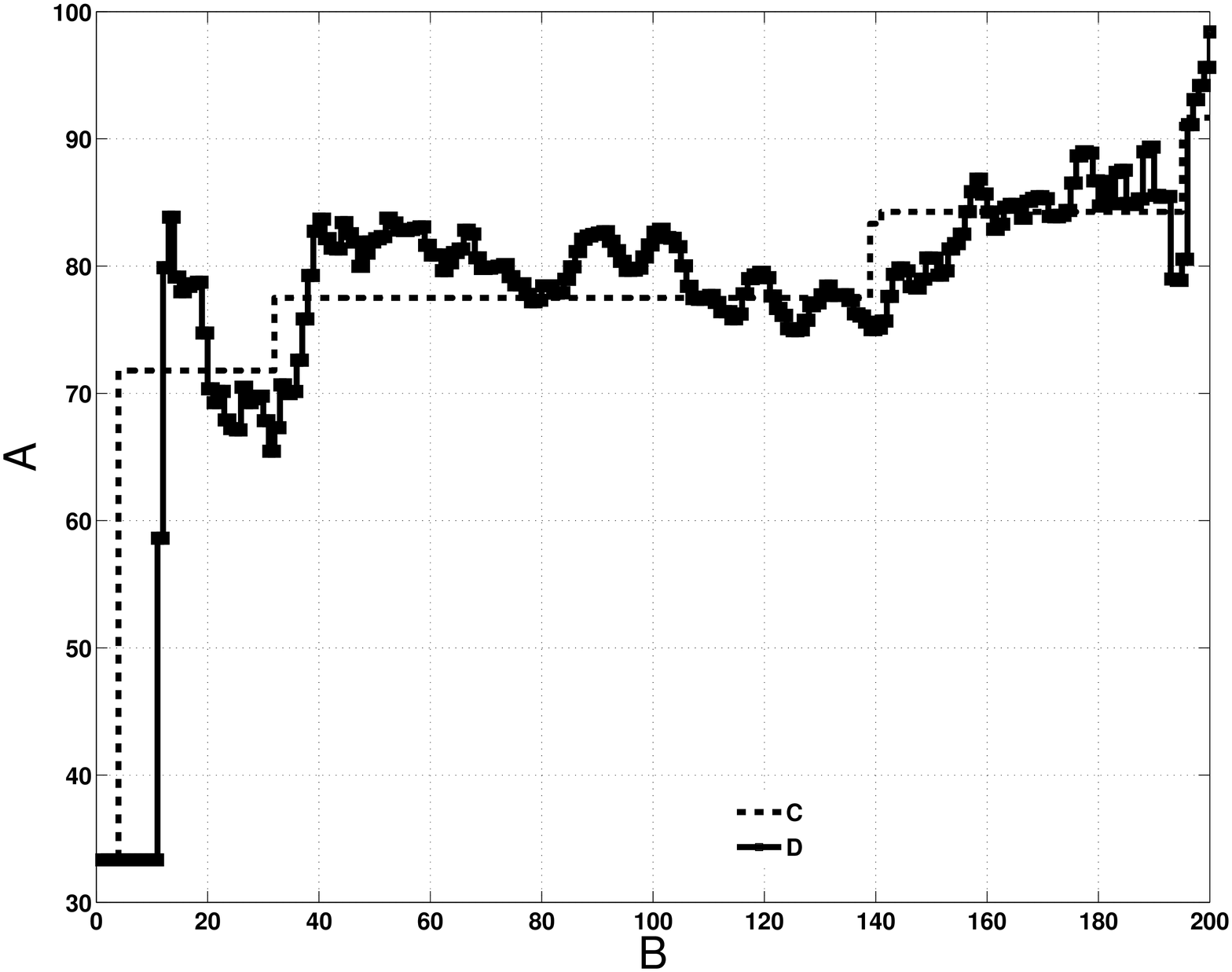}
}
    \end{psfrags}  
\caption{Water level profiles of throughput maximizing offline optimal policy  and online heuristic policy  for a sample realization of packet arrival, energy harvesting and channel fading processes when $N=100$ (a) and $N=200$ (b).}
\vspace{-0.099in}
\label{sample12} 
\end{figure}
\begin{figure}[htpb]
\centering
  \begin{psfrags}
    \psfrag{A}[t]{Window Size (in number of slots)}
	\psfrag{B}[b]{Avg. Throughput (in Mbit/s)}
	\psfrag{P}[b]{Energy Consumption per slot (in nJ)}  
	\psfrag{C}[l]{\scriptsize{Offline Optimal}}
	\psfrag{D}[l]{\scriptsize{Online Heuristic}}
	\subfloat[]{
  \includegraphics[scale=0.26]{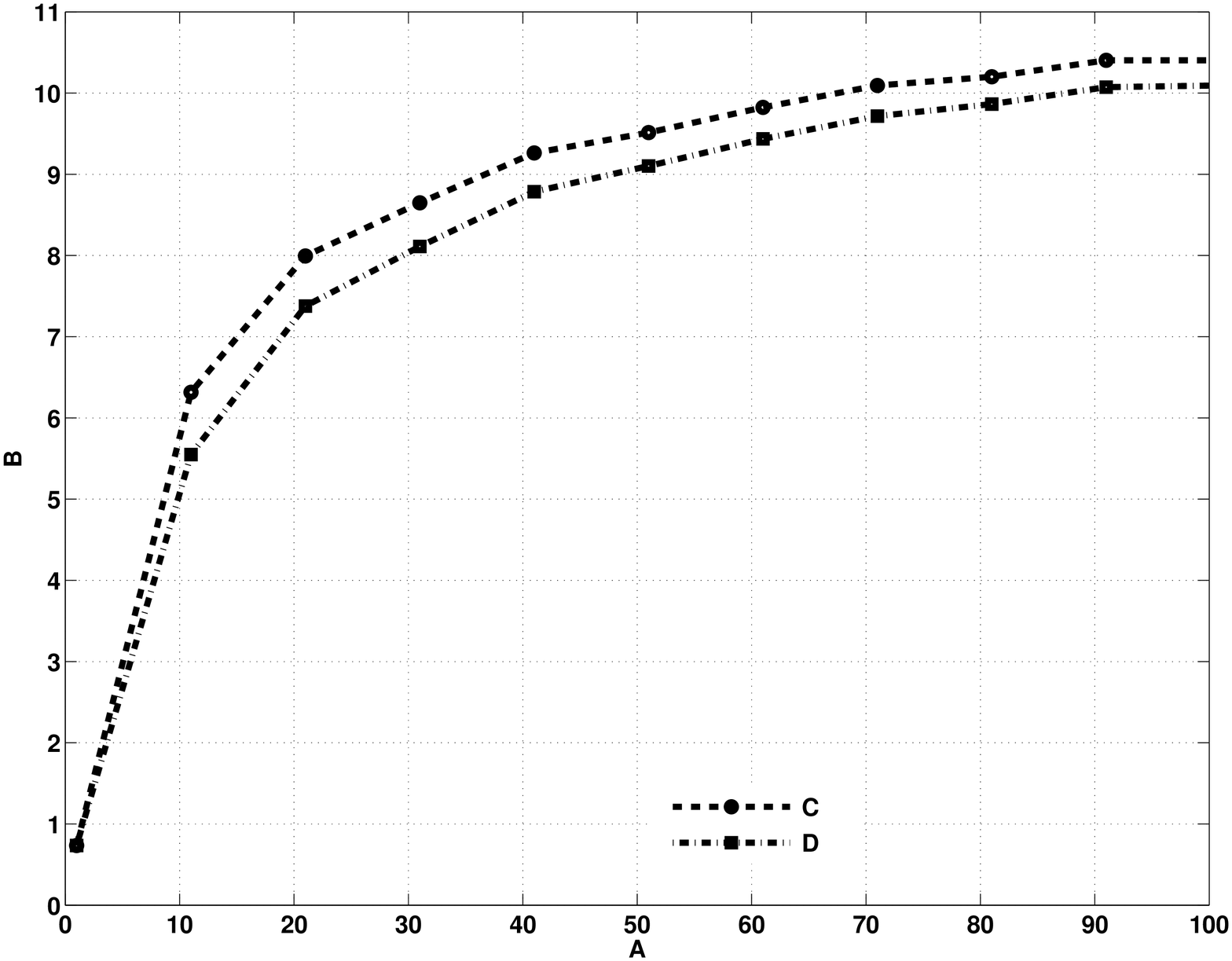}
	}
\subfloat[]{
  \includegraphics[scale=0.26]{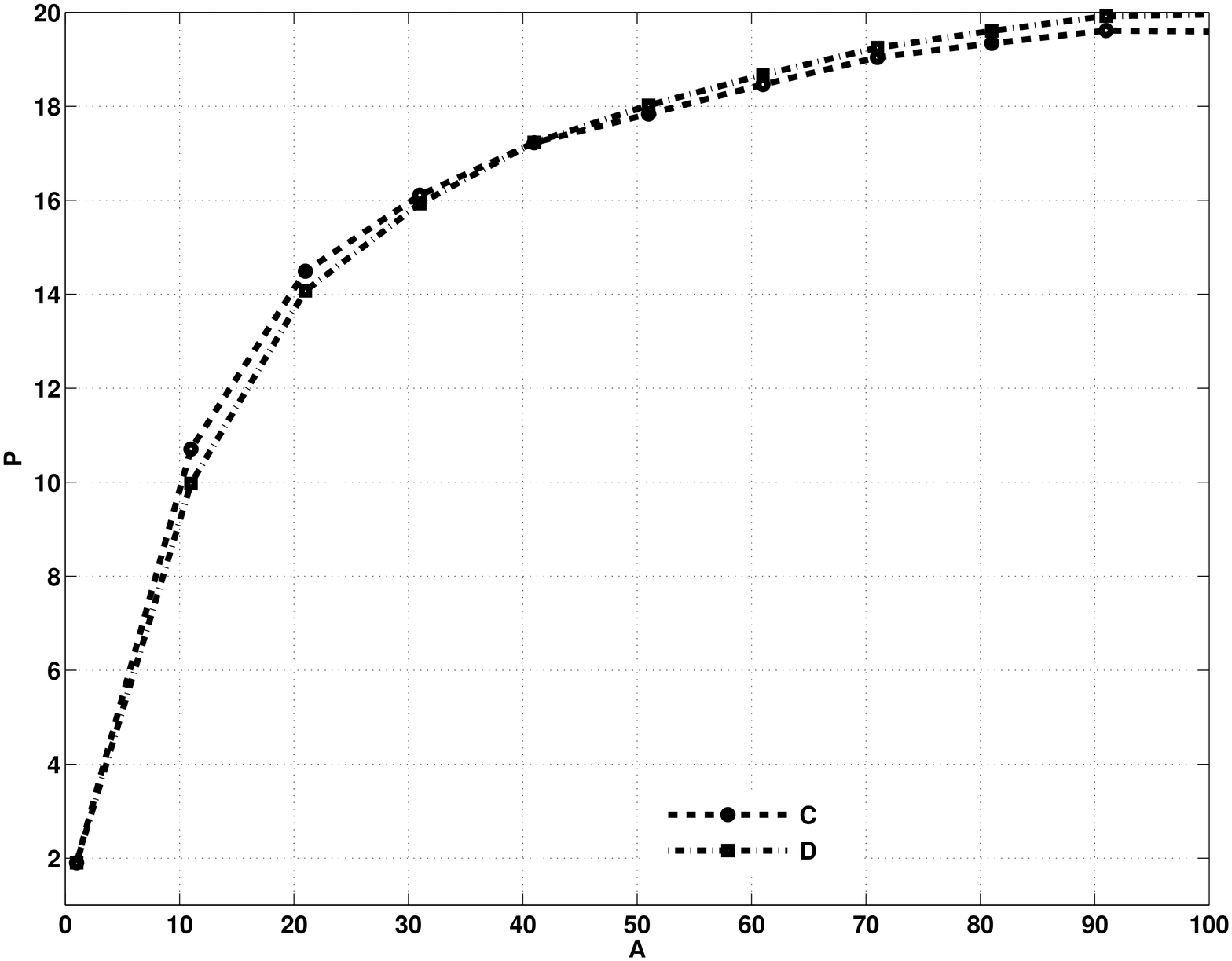}
	}
%
%
    \end{psfrags}  
\caption{Average throughput (a) and energy consumption per slot (b) comparison of throughput maximizing offline optimal policy and online heuristic policy  against varying transmission window size for stationary energy harvesting.}
\vspace{-0.099in}
\label{avwthroughput} 
\end{figure}
\vspace{-0.099in}
\vspace{-0.099in}  
\begin{figure}[htpb]
\centering
  \begin{psfrags}
    \psfrag{A}[t]{Window Size (in number of slots)}
	\psfrag{B}[b]{Avg. Throughput (in Mbit/s)}
	\psfrag{P}[b]{Energy Consumption per slot (in nJ)}  
	\psfrag{C}[l]{\scriptsize{Offline Optimal}}
	\psfrag{D}[l]{\scriptsize{Online Heuristic}}
	\subfloat[]{
  \includegraphics[scale=0.26]{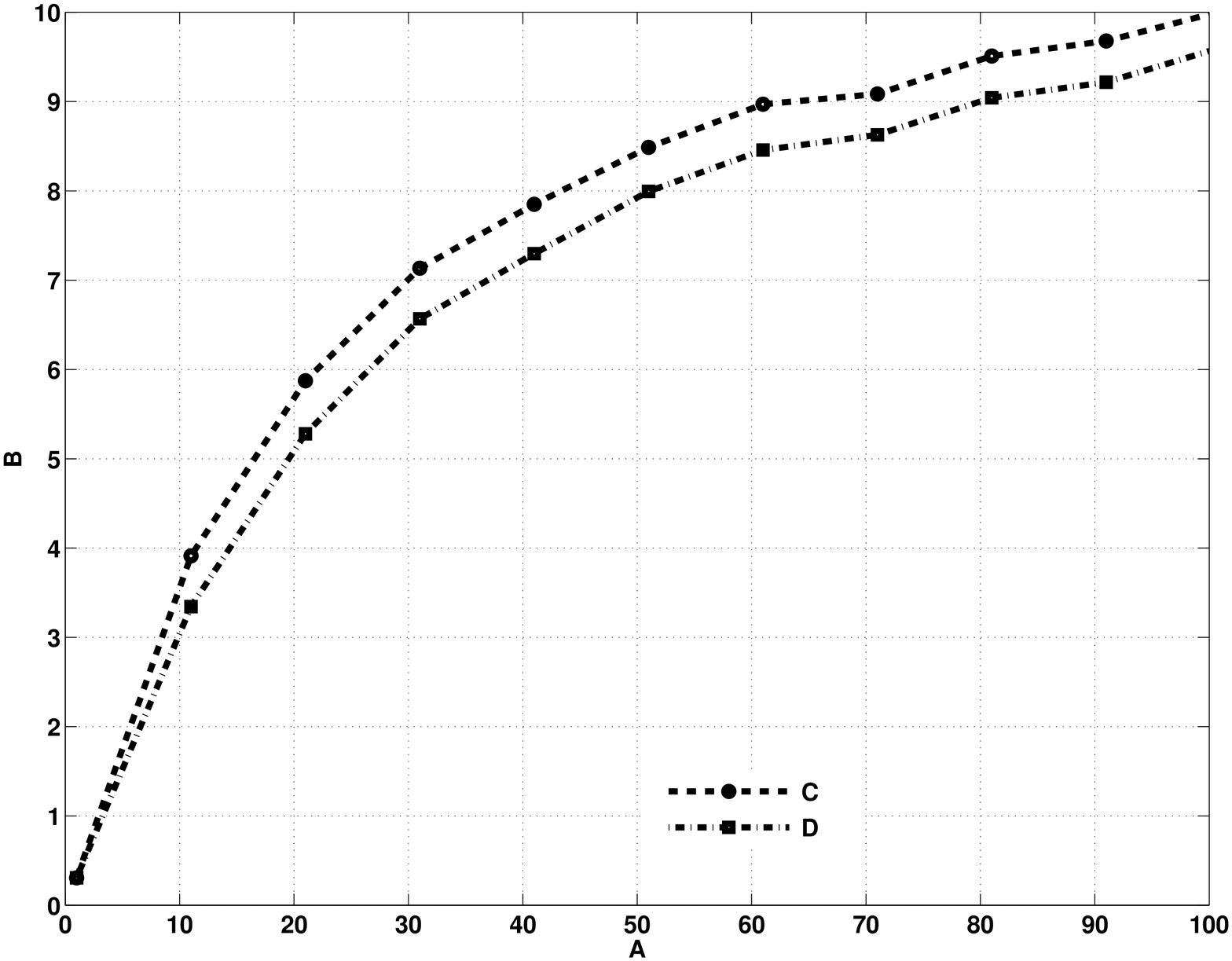}
	}
\subfloat[]{
  \includegraphics[scale=0.26]{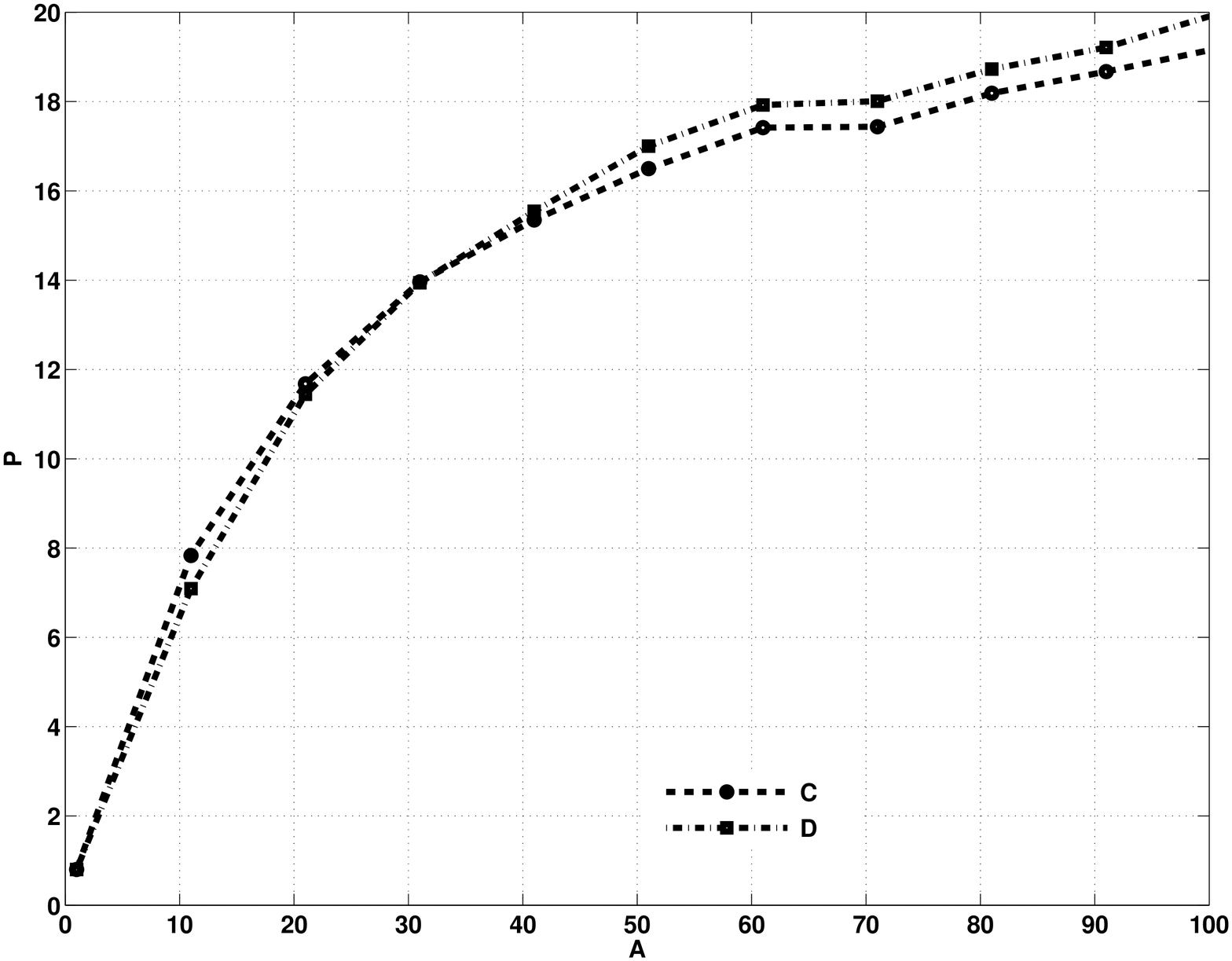}
	}
%
    \end{psfrags}  
\caption{Average throughput (a) and energy consumption per slot (b) comparison of throughput maximizing offline optimal policy  and online heuristic policy  against varying transmission window size for energy harvesting with memory.}
\label{avwthroughputmemory} 
\end{figure}
\begin{figure}[htpb]
\centering
  \begin{psfrags}
    \psfrag{A}[t]{Window Size (in number of slots)}
    \psfrag{B}[b]{Avg. Throughput (in Mbit/s)}
    \psfrag{C}[l]{\scriptsize{Offline Optimal}}
    \psfrag{D}[l]{\scriptsize{Online Heuristic}}
    \psfrag{E}[l]{\scriptsize{Power-Halving}}
    \includegraphics[scale=0.26]{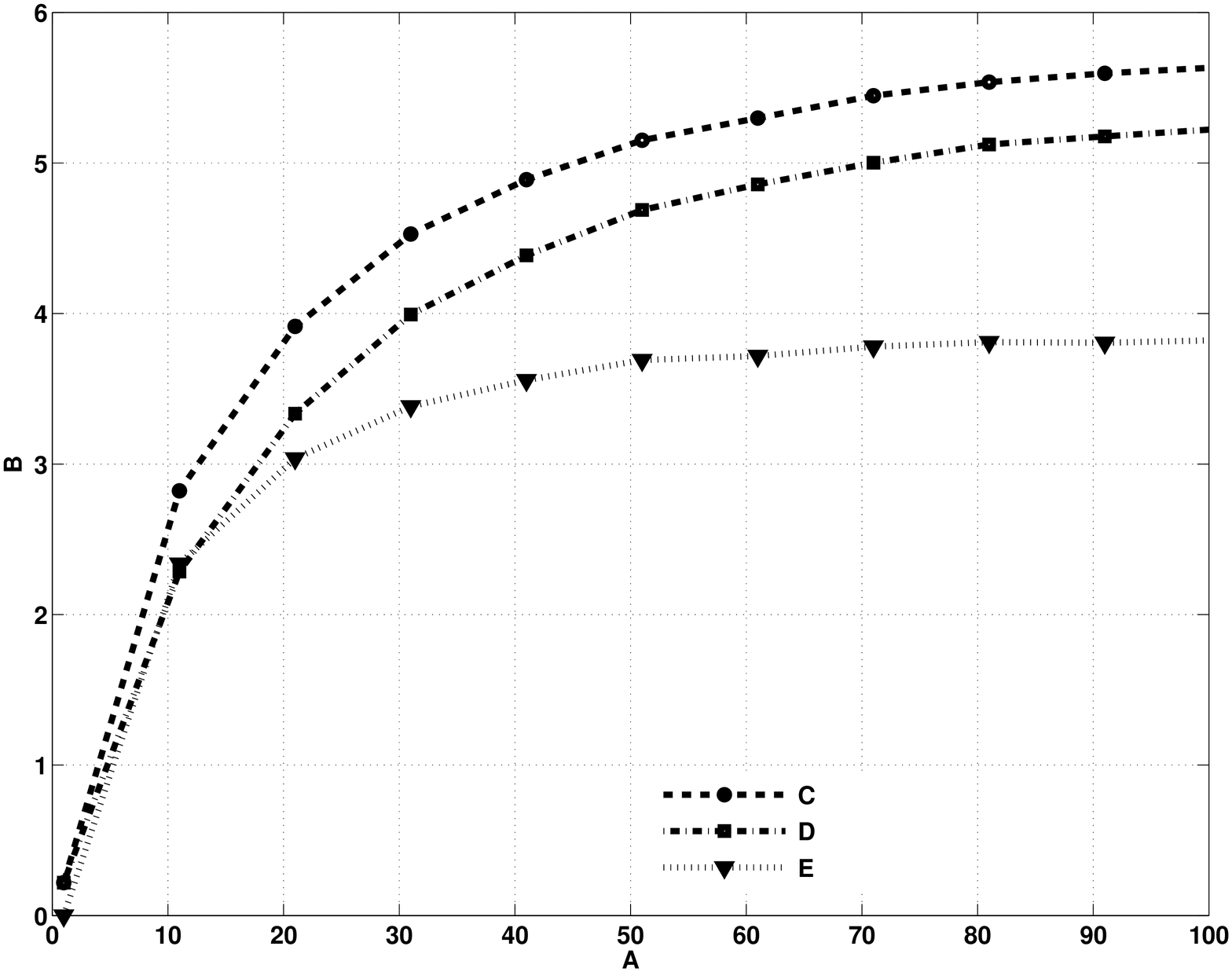}
    \end{psfrags}
\caption{Average throughput comparison of throughput maximizing offline optimal policy  and online heuristic policy  and Power-Halving policy  against varying transmission window size for stationary energy harvesting of $90$nJs occuring with $0.1$ probability in each time slot.}
\label{fig:maxthroughputcomppowerhalv}
\end{figure}
\section{Conclusion}
In this paper, we investigated finite horizon energy efficient transmission schemes in both offline and online problem settings. While the offline problem is a direct extention to existing offline problem formulations, our characterization of the offline optimal solution and the online approach that we introduce differ from previous studies as we intend to establish online optimality in relation with the statistical behavior of offline optimal transmission decisions. We believe these formulations and results could be useful also for similar problems where online performance over finite durations is crucial. In particular, the key contributions of this paper are the following:
\begin{itemize}
\item  In an offline setting, energy efficient transmission with a generic concave rate function is studied over a finite horizon considering energy and data arrivals as well as channel variations. The solution to the offline problem is formulated by offline optimal transmission decisions that depend only on future values  of energy harvests, data arrivals and channel variations. 
\item Based on the stochastic dynamic programming, the online optimal policy  is characterized as the policy that successively minimizes expected throughput losses with respect to the offline optimal transmission decisions.
\item To measure the efficiency of any online policy relative to the performance of the offline optimal transmission policy, the immediate fill metric is introduced. This metric can be also used to derive new online policies with performance guarantees as it can be lower bounded analytically.
\item Considering the simpler static channel case, the immediate fill of the policy that applies the expectation of offline optimal power level as the online power level is lower bounded and the distribution of offline optimal power level is characterized as the problem horizon approaches to infinity.
\end{itemize}

\section*{Acknowledgments}

This work was funded in part by TUBITAK and in part by the Science Academy of Turkey under a BAGEP award.

\section*{Appendix}
\subsection{The proof of Theorem \ref{optimalwater} }
\begin{proof}
We divide the proof of Theorem \ref{optimalwater} into two parts: 

(i) We show that if the water level of any slot $n$ is higher than the water level of the next slot $n+1$ ($w_{n} > \mathsf{w}_{n+1}$), then, there is an offline transmission schedule which achieves at least the same throughput or consumes at the most the same amount of energy with the initial schedule, i.e. the initial schedule with $w_{n} > \mathsf{w}_{n+1}$ for some slot $n$ is not an EE-OFF schedule. 

(ii) We show that in the offline optimal (EE-TM-OFF) policy, the water level $w_{n}$ is not lower than the maximum feasible level incurred by the inequalities resulting from the argument of part (i), i.e. $w_{n}=\min \lbrace w_{n}^{(e)}(w_{n}), w_{n}^{(b)}(w_{n}) \rbrace$ should be satisfied for any slot $n$ in an EE-TM-OFF policy.

Part (i):
Suppose that in a given transmission scheme $\pi$, $w_{n}>w_{n+1}$ for  some $n$. One can show that $\pi$ can be improved  by reducing $w_{n}$ and increasing $\mathsf{w}_{n+1}$ through one of the following: (Case a) move some data form slot $n$ to slot $n+1$ while keeping the total throughput achieved during $(n,n+1)$ fixed, (Case b) move some energy from slot $n$ to slot $n+1$ while keeping the total energy consumed during $(n,n+1)$ fixed.
Let $\rho_{n}^{\pi}$ and $\rho_{n+1}^{\pi}$ be the transmission power levels for slots $(n,n+1)$ belonging to the scheme $\pi$.

(Case a) Consider the following convex optimization problem for slots $(n,n+1)$:
\[
\min_{\rho_{n},\rho_{n+1}} \rho_{n}+\rho_{n+1}   
\]
\[
f(1+\rho_{n}\gamma_{n})+f(1+\rho_{n+1}\gamma_{n+1})= D_{n,n+1}
\]
\[
\rho_{n}\geq 0, \rho_{n+1}\geq 0
\]
where $D_{n,n+1}$ corresponds to the total throughput obtained  by the scheme $\pi$ during $(n,n+1)$, i.e. $f(1+\rho_{n}^{\pi}\gamma_{n})+f(1+\rho_{n+1}^{\pi}\gamma_{n+1})$.
The Lagrangian of the above problem can be written as follows:
\[
\mathcal{L} (\rho_{n},\rho_{n+1}, \lambda, \mu_{n}, \mu_{n+1} ) =
\]
\[
-(\rho_{n}+\rho_{n+1}) + \lambda (f(1+\rho_{n}\gamma_{n})+f(1+\rho_{n+1}\gamma_{n+1})-D_{n,n+1})
\]
\[
-\mu_{n}\rho_{n} - \mu_{n+1}\rho_{n+1}
\]
By setting $\dfrac{\partial\mathcal{L}}{\partial\rho_{n}}=0$, we get the following:
\[
\gamma_{n}f'(1+\rho_{n}\gamma_{n})=\frac{\mu_{n}+1}{\lambda} 
\]
Also, considering the complementary slackness for $\mu_{n}$, $\mu_{n}$ should be set to zero whenever  $\rho_{n}\geq 0$. Therefore, the optimal solution $\rho_{n}^{*}$ can be expressed as in the following:
\[
\rho_{n}^{*} = \frac{1}{\gamma_{n}}\left[ (f')^{-1}(\frac{1}{\lambda\gamma_{n}})-1\right]^{+} 
\]
Similarly, the optimal $\rho_{n+1}^{*}$ is as in below:
\[
\rho_{n+1}^{*} = \frac{1}{\gamma_{n+1}}\left[ (f')^{-1}(\frac{1}{\lambda\gamma_{n+1}})-1\right]^{+} 
\]
Accordingly, $(\rho_{n}+\rho_{n+1})$ is minimized when both water levels $w_{n}$ and $w_{n+1}$ are set to $\lambda$ that satisfies the total throughput constraint.

When $w_{n}>\mathsf{w}_{n+1}$, the optimal water level should be inside $(w_{n}, w_{n+1})$ as the total throughput  strictly decreasing with decreasing $w_{n}$ as long as $\rho_{n} > 0$ . Therefore, the water levels $w_{n}$ and $\mathsf{w}_{n+1}$ can always be equalized by transferring some data from slot $n$ to $n+1$ . This does not violate data causality  as the throughput at slot $n$ is reduced while the total throughput achieved during $(n,n+1)$ is preserved by increasing the throughput at slot $n+1$ to compensate. 

(Case b) Similarly, we consider the following optimization problem:
\[
\max_{\rho_{n},\rho_{n+1}}  f(1+\rho_{n}\gamma_{n})+f(1+\rho_{n+1}\gamma_{n+1})  
\]
\[
\rho_{n}+\rho_{n+1}  = E_{n,n+1}
\]
\[
\rho_{n}\geq 0, \rho_{n+1}\geq 0
\]
where $E_{n,n+1}$ corresponds to the total energy consumption  by the scheme $\pi$ during $(n,n+1)$, i.e. $E_{n,n+1}= \rho_{n}^{\pi}+\rho_{n+1}^{\pi}$.

The Lagrangian of the above problem can be written as follows:
\[
\mathcal{L} (\rho_{n},\rho_{n+1}, \lambda, \mu_{n}, \mu_{n+1} ) =
\]
\[
f(1+\rho_{n}\gamma_{n})+f(1+\rho_{n+1}\gamma_{n+1})  + \lambda ((\rho_{n}+\rho_{n+1}) -E_{n,n+1})
\]
\[
-\mu_{n}\rho_{n} - \mu_{n+1}\rho_{n+1}
\]
By setting $\dfrac{\partial\mathcal{L}}{\partial\rho_{n}}=0$, we get the following:
\[
\gamma_{n}f'(1+\rho_{n}\gamma_{n})=\mu_{n}-\lambda  
\]
After setting the KKT multiplier $\mu_{n}$ to zero where  $\rho_{n}\geq 0$, we get the following expression for the optimal  $\rho_{n}^{*}$:
\[
\rho_{n}^{*} = \frac{1}{\gamma_{n}}\left[ (f')^{-1}(\frac{\lambda}{\gamma_{n}})-1\right]^{+} 
\]
Similarly, for optimizing $\rho_{n+1}$ setting $\dfrac{\partial\mathcal{L}}{\partial\rho_{n+1}}=0$ gives the following expression for the optimal $\rho_{n+1}^{*}$:
\[
\rho_{n+1}^{*} = \frac{1}{\gamma_{n+1}}\left[ (f')^{-1}(\frac{\lambda}{\gamma_{n+1}})-1\right]^{+} 
\]
When both water levels $w_{n}$ and $\mathsf{w}_{n+1}$ are equalized to $\frac{1}{\lambda}$ that satisfies the total energy constraint, the total throughput achieved during the slots $(n,n+1)$ is maximized and this can be done whenever $w_{n}>w_{n+1}$ by transferring energy from $n$ and $n+1$ without violating energy causality or total energy constraints. Therefore in an EE-OFF schedule, water levels $w_n$s are non-decreasing with increasing $n$.

Part (ii):
By the energy causality, total energy consumption is bounded as follows:
\[
\displaystyle\sum_{l=n}^{n+u}\rho_{l} \leq e_{n}+ \displaystyle\sum_{l=n+1}^{n+u} H_{l}, u=1,2,....,(N-n),
\]
Expressing $\rho_{l}$ using  water levels:
\[
\displaystyle\sum_{l=n}^{n+u}\frac{1}{\gamma_{l}}\left[ (f')^{-1}(\frac{1}{w_{l}\gamma_{l}})-1\right]^{+} \leq e_{n}+ \displaystyle\sum_{l=n+1}^{n+u} H_{l}
\]
In an optimal scheme, $w_{n}\leq \mathsf{w}_{m}$ for any slot $m>n$  as it is proven in Part (i), thus:
\[
\displaystyle\sum_{l=n}^{n+u}\frac{1}{\gamma_{l}}\left[ (f')^{-1}(\frac{1}{w_{n}\gamma_{l}})-1\right]^{+} \leq \displaystyle\sum_{l=n}^{n+u}\frac{1}{\gamma_{l}}\left[ (f')^{-1}(\frac{1}{w_{l}\gamma_{l}})-1\right]^{+}
\]
And accordingly:
\[
\displaystyle\sum_{l=n}^{n+u}\frac{1}{\gamma_{l}}\left[ (f')^{-1}(\frac{1}{w_{n}\gamma_{l}})-1\right]^{+} \leq e_{n}+ \displaystyle\sum_{l=n+1}^{n+u} H_{l}
\]
The above inequality should be satisfied for any $u=1,2,....,(N-n)$ and it can be seen that $w_{n}$ is bounded by its lowest value for which the inequality holds with equality for some $u=1,2,....,(N-n)$. To find the energy bound value for $w_{n}$, the inequality can be transformed into the following form. 
\[
w_{n} \leq \frac{e_{n}+ \displaystyle\sum_{l=n+1}^{n+u} H_{l} + \displaystyle\sum_{l=n}^{n+u} K_{l}^{(e)}(w_{n}) }{u+1}
\]
The maximum value  of $w_{n}$ that satisfies the energy causality is given by the following: 
\[
w_{n}^{(e)}(w_{n})= \displaystyle\min_{u=0,...,(N-n)} \frac{e_{n}+ \displaystyle\sum_{l=n+1}^{n+u} H_{l} + \displaystyle\sum_{l=n}^{n+u} K_{l}^{(e)}(w_{n}) }{u+1} 
\]
Similarly, the data causality bounds the water level $w_{n}$ as follows:
\[
w_{n}^{(b)}(w_{n}) =\displaystyle\min_{v=0,...,(N-n)} \frac{b_{n}+ \displaystyle\sum_{l=n+1}^{n+v} B_{l} + \displaystyle\sum_{l=n}^{n+v} K_{l}^{(b)}(w_{n})  }{v+1}
\]
Any EE-TM-OFF schedule is EE-OFF by definition, hence $w_{n}\leq \min \lbrace w_{n}^{(e)}(w_{n}), w_{n}^{(b)} (w_{n})\rbrace$ for any EE-TM-OFF schedule. We will show that in EE-TM-OFF schedule water level $w_{n}$ also should not be smaller than $\min \lbrace w_{n}^{(e)}(w_{n}), w_{n}^{(b)} (w_{n})\rbrace$, i.e. $w_{n} \geq \min \lbrace w_{n}^{(e)}(w_{n}), w_{n}^{(b)}(w_{n}) \rbrace$.

Consider an EE-OFF schedule where $w_{n} = \min \lbrace w_{n}^{(e)}(w_{n}), w_{n}^{(b)} (w_{n})\rbrace$ for slot $n$ and $\mathsf{w}_{m}\leq \min \lbrace w_{m}^{(e)}(w_{m}), w_{m}^{(b)} (w_{m})\rbrace$ for slots $m>n$ since the schedule is EE-OFF. The selection of $w_{n}$ only affects the throughput achieved during the slots $n$ to $N$, hence if the reselection of $w_{n}$ as $w_{n} < \min \lbrace w_{n}^{(e)}(w_{n}), w_{n}^{(b)} (w_{n})\rbrace$ could improve the throughput achieved by the schedule within $[n,N]$ while keeping EE-OFF property, then the modified schedule could be EE-TM-OFF. This is not possible due to the observation in Remark \ref{derivativeremark}. When $w_{n} = \min \lbrace w_{n}^{(e)}(w_{n}), w_{n}^{(b)} (w_{n})\rbrace$, to improve the total throughput achieved in later slots $n+1, n+2, ....., N$, some energy/data can be moved from $n$ to later slots, however the throughput decrease in slot $n$ would be larger than the possible increase in some later slot $m>n$ as the derivative of the throughput with respect to power level (Remark \ref{derivativeremark}) decreases with increasing water level and $w_{m}\geq w_{n}$ in an EE-OFF policy. Hence, selecting the water level as $w_{n} = \min \lbrace w_{n}^{(e)}(w_{n}), w_{n}^{(b)} (w_{n})\rbrace$ always maximizes the total throughput as long as $w_{m}\geq w_{n}$ for $m>n$ which means all of the water levels $w_{m}$s after $n$ should be also selected as $w_{m} = \min \lbrace w_{m}^{(e)}(w_{m}), w_{m}^{(b)} (w_{m})\rbrace$.
\end{proof}
\subsection{The proof of Theorem \ref{thm:minmumfill} }
\begin{proof}
Consider the inequality for $n=N$:
\begin{equation}
 \label{lowerfillN}
 \eta^{w}(x_{N},\theta_{N}) \geq \displaystyle\min_{m \geq N} \displaystyle\min_{(x_{m},\theta_{m})} \mu_{m}^{w}(x_{m},\theta_{m})
\end{equation}
which means,
\[
 \eta^{w}(x_{N},\theta_{N}) \geq  \displaystyle\min_{(x_{N},\theta_{N})} \mu_{N}^{w}(x_{N},\theta_{N})
\]
The above inequality always holds as the offline optimal water level of the last slot $\tilde{w}_{N}^{*}$ is deterministic given $x_{N}$ implying  that $\eta^{w}(x_{N},\theta_{N})$ and $\mu_{N}^{w}(x_{N},\theta_{N})$ are both equal to $1$ if  $w_{N}=\tilde{w}_{N}^{*}$ for any $x_{N}$ and $\theta_{N}$.

Now, consider the following inequality:
\[
 \eta^{w}(x_{n+1},\theta_{n+1}) \geq \displaystyle\min_{m \geq n+1} \displaystyle\min_{(x_{m},\theta_{m})} \mu_{m}^{w}(x_{m},\theta_{m})
\]
We will show that the above inequality implies the inequality  (\ref{lowerfill}). The efficiency of the online policy $w$ can be expressed as follows:
\[
 \eta^{w}(x_{n},\theta_{n}) = 
\]
\[
\frac{(\log_{2}(w_{n}\gamma_{n}))^{+}+\displaystyle E_{X_{n}} [\hat{V}_{n+1 \mid \theta_{n+1}}^{*}(x_{n+1})\mid x_{n},\theta_{n}]}{(\log_{2}(w_{n}\gamma_{n}))^{+}+E_{\psi_{n}}[F_{n}(\tilde{w}_{n}^{*}, w_{n})+\tilde{V}_{n+1 \mid \theta_{n+1}}^{*}(x_{n+1})\mid x_{n},\theta_{n}]}
\]
\[
\geq \displaystyle\min \left\lbrace \mu_{n}^{w}(x_{n},\theta_{n}), \frac{\displaystyle E_{X_{n}} [\hat{V}_{n+1 \mid \theta_{n+1}}^{*}(x_{n+1})\mid x_{n},\theta_{n}]}{E_{\psi_{n}}[\tilde{V}_{n+1 \mid \theta_{n+1}}^{*}(x_{n+1})\mid x_{n},\theta_{n}]}\right\rbrace 
\]
\[
\geq \displaystyle\min \left\lbrace \mu_{n}^{w}(x_{n},\theta_{n}), \displaystyle\min_{(x_{n+1},\theta_{n+1})}  \eta^{w}(x_{n+1},\theta_{n+1})  \right\rbrace 
\]
\[
= \displaystyle\min_{m \geq n} \displaystyle\min_{(x_{m},\theta_{m})} \mu_{m}^{w}(x_{m},\theta_{m})
\]
Similarly, by the backward induction, the inequality (\ref{lowerfillN}) implies the inequality  (\ref{lowerfill}).
\end{proof}
\subsection{The proof of Proposition \ref{thm:immediateprop} }
\begin{proof}
To obtain the lower bound in Proposotion \ref{thm:immediateprop} for the immediate fill of the decision $\rho_{n}= w_{n}-1$, we first consider the immediate loss
term $E_{\psi_{m}}[F_{m}(\tilde{w}_{m}^{*}, w_{m})\mid x_{m},\theta_{m}]$ which is basically the expected throughput difference between the schedules$(\rho_{n},\tilde{\rho}_{n+1}^{\triangleright},....,\tilde{\rho}_{N}^{\triangleright})$ and $(\tilde{\rho}_{n}^{*},\tilde{\rho}_{n+1}^{*},...., \tilde{\rho}_{N}^{*})$ where $\tilde{\rho}_{n+1}^{\triangleright},....,\tilde{\rho}_{N}^{\triangleright}$ are offline optimal power levels following the decision $\rho_{n}$. The immediate loss is the expectation of the throughput difference in below:
\[
\log_2 (1+\tilde{\rho}_{n}^{*})-\log_2 (1+\rho_{n})+\xi(\rho_{n})
\]
where $\xi(\rho_{n})=\displaystyle\sum_{k=n+1}^{N}\log_2 (1+\tilde{\rho}_{k}^{*})-\displaystyle\sum_{k=n+1}^{N}\log_2 (1+\tilde{\rho}_{k}^{\triangleright})$

Then, we can bound the difference as follows:
\[\xi(\rho_{n}) = \displaystyle\sum_{k=n+1}^{N} \log_2\left( 1+ \frac{\tilde{\rho}_{k}^{*}-\tilde{\rho}_{k}^{\triangleright}}{1+\tilde{\rho}_{k}^{*}-(\tilde{\rho}_{k}^{*}-\tilde{\rho}_{k}^{\triangleright})}\right) 
\]
\[
\leq \displaystyle\max_{\Delta \in \mathbf{S}(\rho_{n}) } \displaystyle\sum_{k=n+1}^{N} \log_2\left( 1+ \frac{\Delta_k}{1+\tilde{\rho}_{k}^{*}-\Delta_k}\right)
\]
where $\Delta$ is the vector $[\Delta_{n+1},\Delta_{n+2},....,\Delta_{N}]$ and $\mathbf{S}(\rho_{n})$ is the set of all $\Delta$ vectors for which $\Delta$ is a possible instance of the vector $  [\tilde{\rho}_{n+1}^{*}-\tilde{\rho}_{n+1}^{\triangleright},\tilde{\rho}_{n+2}^{*}-\tilde{\rho}_{n+2}^{\triangleright},...., \tilde{\rho}_{N}^{*}-\tilde{\rho}_{N}^{\triangleright}]$. We know the following facts for any $\Delta$ vector in  the set $\mathbf{S}(\rho_{n})$: If $\Delta \in \mathbf{S}(\rho_{n})$,  $0 \leq \Delta \leq [\tilde{\rho}_{n+1}^{*},\tilde{\rho}_{n+1}^{*},....,\tilde{\rho}_{N}^{*}] $ and $\Vert \Delta \Vert_{1} = (\rho_{n}-\tilde{\rho}_{n}^{*})$ since the energy consumption of both $(\rho_{n},\tilde{\rho}_{n+2}^{\triangleright},....,\tilde{\rho}_{N}^{\triangleright})$ and $(\tilde{\rho}_{n}^{*},\tilde{\rho}_{n+1}^{*},...., \tilde{\rho}_{N}^{*})$ schedules should be equal. Now, consider the case $\rho_{n}< \tilde{\rho}_{n}^{*}$. Clearly, $\xi(\rho_{n})< 0$ for this case since the  offline optimal decisions $\tilde{\rho}_{k}^{\triangleright}$s have more energy to spend than the offline optimal decisions $\tilde{\rho}_{k}^{*}$s. Therefore, we can upper bound $\xi(\rho_{n})$ considering the instances of $\tilde{\rho}_{n}^{*}$ where $\rho_{n}\geq \tilde{\rho}_{n}^{*}$: 
\[
\xi(\rho_{n}) \leq \displaystyle\max_{\substack{\Delta \in \mathbf{S}(\rho_{n}) \\ \tilde{\rho}_{n}^{*} \leq \rho_{n} } }\displaystyle\sum_{k=n+1}^{N} \log_2\left( 1+ \frac{\Delta_k}{1+\tilde{\rho}_{k}^{\triangleright}}\right)\leq \displaystyle\max_{\substack{\Delta \in \mathbf{S}(\rho_{n}) \\ \tilde{\rho}_{n}^{*} \leq \rho_{n} } }\displaystyle\sum_{k=n+1}^{N} \log_2\left( 1+ \frac{\Delta_k}{1+\tilde{\rho}_{n+1}^{\triangleright}}\right)
\]
\[
\leq \displaystyle\max_{\substack{ \Vert \Delta \Vert_{1} = (\rho_{n}-\tilde{\rho}_{n}^{*}) \\ \tilde{\rho}_{n}^{*} \leq \rho_{n} } }\displaystyle\sum_{k=n+1}^{N} \log_2\left( 1+ \frac{\Delta_k}{1+\tilde{\rho}_{n+1}^{\triangleright}}\right)= (N-n)\log_2\left( 1+ \frac{\frac{(\rho_{n}-\tilde{\rho}_{n}^{*})}{N-n}}{1+\tilde{\rho}_{n+1}^{\triangleright}}\right), \tilde{\rho}_{n}^{*} \leq \rho_{n}
\]
\[
\leq \displaystyle\sup_{\substack{ N  \in \mathbb{N}^{+}\\ \tilde{\rho}_{n}^{*} \leq \rho_{n} } }(N-n)\log_2\left( 1+ \frac{\frac{(\rho_{n}-\tilde{\rho}_{n}^{*})}{N-n}}{1+\tilde{\rho}_{n+1}^{\triangleright}}\right)= \displaystyle\lim_{\substack{ N \rightarrow +\infty\\ \tilde{\rho}_{n}^{*} \leq \rho_{n} } }(N-n)\log_2\left( 1+ \frac{\frac{(\rho_{n}-\tilde{\rho}_{n}^{*})}{N-n}}{1+\tilde{\rho}_{n+1}^{\triangleright}}\right)
\]
\[
=\ln(2)\frac{\rho_{n}-\tilde{\rho}_{n}^{*}}{1+\tilde{\rho}_{n+1}^{\triangleright}}, \tilde{\rho}_{n}^{*} \leq \rho_{n}
\]

Therefore, $\xi(\rho_{n})$ can be upper bounded as:
\[
\xi(\rho_{n}) \leq \ln(2)\frac{(\rho_{n}-\tilde{\rho}_{n}^{*})^{+}}{1+\tilde{\rho}_{n+1}^{\triangleright}}
\]
since $\xi(\rho_{n})<0$ for $\rho_{n}< \tilde{\rho}_{n}^{*}$. Acccordingly,
\[
E_{\psi_{m}}[F_{m}(\tilde{w}_{m}^{*}, w_{m})\mid x_{m},\theta_{m}]\leq E[\log_2 (1+\tilde{\rho}_{n}^{*})]-\log_2 (1+\rho_{n})+ \ln(2)E\left[ \frac{(\rho_{n}-\tilde{\rho}_{n}^{*})^{+}}{1+\tilde{\rho}_{n+1}^{\triangleright}}\right] 
\]
Hence,
\[
\mu_{n}^{w}(x_{n},\theta_{n}) = \frac{\log_2(1+\rho_{n})}{\log_2(1+\rho_{n}) +E_{\psi_{m}}[F_{m}(\tilde{w}_{m}^{*}, w_{m})\mid x_{m},\theta_{m}] }
\]
\[
\geq \frac{\ln(1+\rho_{n})}{E\left[ \ln(1+\tilde{\rho}_{n}^{*})\right] +E\left[\frac{(\rho_{n}-\tilde{\rho}_{n}^{*})^{+}}{1+\tilde{\rho}_{n+1}^{\triangleright}}\right]  }
\]
\end{proof}
\subsection{The proof of Proposition \ref{immediatereprop} }
\begin{proof}
The bound in Proposition \ref{thm:immediateprop} can be simplified as follows,
\[
\mu_{n}^{w}(x_{n},\theta_{n}) \geq \frac{\ln(1+\rho_{n})}{E\left[ \ln(1+\tilde{\rho}_{n}^{*})\right] +E\left[ (\rho_{n}-\tilde{\rho}_{n}^{*})^{+}\right] }
\]

When $\rho_{n}=E[\tilde{\rho}_{n}^{*}]$, $E\left[ \ln(1+\tilde{\rho}_{n}^{*})\right] \leq \ln(1+E[\tilde{\rho}_{n}^{*}])$ due to Jensen's inequality. Therefore,
\[
\mu_{m}^{\check{w}}(x_{m},\theta_{m}) \geq \frac{1}{ 1 + \frac{E\left[(E[\tilde{\rho}_{n}^{*}]-\tilde{\rho}_{n}^{*})^{+}\right]}{\ln(1+E[\tilde{\rho}_{n}^{*}])}}
\]
\[
\geq \frac{1}{ 1 + \frac{E\left[\sqrt{(E[\tilde{\rho}_{n}^{*}]-\tilde{\rho}_{n}^{*})^{2}}\right]}{\ln(1+E[\tilde{\rho}_{n}^{*}])}}
\]
\[
\geq \frac{1}{ 1 + \frac{\sqrt{Var\left(\tilde{\rho}_{n}^{*}\right)}}{\ln(1+E[\tilde{\rho}_{n}^{*}])}}
\]
where the last step used Jensen's inequality on $\sqrt{.}$ function.
\end{proof}
\subsection{The proof of Theorem \ref{thm:ccdfrho} }
\begin{proof}

(i) Clearly, for any given $N$ and $e_n = x$, $\tilde{\rho}_{n}^{*}$ is lower bounded by $\frac{x}{N-n+1}$
which goes to infinity as $x$ goes to infinity hence the probability that $\tilde{\rho}_{n}^{*}$ is smaller than some $r$ should go to $0$. Similarly, $\tilde{\rho}_{n}^{*}$ is upper bounded by $x$ hence the probability that $\tilde{\rho}_{n}^{*}$ is smaller than some $r$ should go to $1$ as $x$ gets arbitrarily close to $r$.

(ii) The probability function $\Pr(\tilde{\rho}_{n}^{*}< \frac{h}{m} \mid e_n = x)$ can be interpreted as the probability that an energy outage occurs until the end of problem horizon when the power level $\frac{h}{m}$ energy/slot is continuously applied after the slot $n$ where the energy level is given as $e_n = x$.
 
By definition, for $x \leq \frac{h}{m}$, 
\[
\Pr(\tilde{\rho}_{n}^{*}< \frac{h}{m} \mid e_n = x) = 1
\]
For $x>\frac{h}{m}$, the energy outage does not occur at slot $n$. Therefore, if it occurs, the energy outage should occur after the slot $n$: 
\[
\Pr(\tilde{\rho}_{n}^{*}< \frac{h}{m} \mid e_n = x)=\Pr(\tilde{\rho}_{n+1}^{*}<\frac{h}{m} \mid e_{n+1} = x-\frac{h}{m}+H_{n})
\]
which means:
\begin{eqnarray}
&\Pr(\tilde{\rho}_{n}^{*}< \frac{h}{m} \mid \!\! e_n  \!\!  = x) =(1 \!\! -p)\Pr(\tilde{\rho}_{n+1}^{*}< \frac{h}{m} \mid \!\! e_{n+1}\!\! =x-\frac{h}{m})\nonumber \\ 
& +p\Pr(\tilde{\rho}_{n+1}^{*}< \frac{h}{m} \mid  \!\! e_{n+1} \!\! = x-\frac{h}{m}+h) &
\label{eq:cdfstretched}
\end{eqnarray}
For $N=n$,
\[
\Pr(\tilde{\rho}_{n}^{*}< \frac{h}{m} \mid e_n = x)= rect(\frac{mx}{h}-\frac{1}{2})
\]
Similarly, for $N=n+1$, Eq. (\ref{eq:cdfstretched}) gives the following as $\Pr(\tilde{\rho}_{n+1}^{*}<\frac{h}{m} \mid e_{n+1} = x)=rect(\frac{mx}{h}-\frac{1}{2})$:
\[
\Pr(\tilde{\rho}_{n}^{*}< \frac{h}{m} \mid e_n = x)=rect(\frac{mx}{h}-\frac{1}{2})+(1-p)rect(\frac{mx}{h}-\frac{3}{2})
\]
Now, suppose that:
\[
\Pr(\tilde{\rho}_{n+1}^{*}<\frac{h}{m} \mid e_{n+1} = x)=rect(\frac{mx}{h}-\frac{1}{2})+\displaystyle\sum_{j=1}^{K}a_{j,n+1}rect(\frac{mx}{h}-\frac{1}{2}-j)
\]
or assumming $a_{0,n+1}=1$,
\[
\Pr(\tilde{\rho}_{n+1}^{*}<\frac{h}{m} \mid e_{n+1} = x)=\displaystyle\sum_{j=0}^{K}a_{j,n+1}rect(\frac{mx}{h}-\frac{1}{2}-j)
\]

By Eq. (\ref{eq:cdfstretched}),
\[
\Pr(\tilde{\rho}_{n}^{*}<\frac{h}{m} \mid e_{n} = x)=rect(\frac{mx}{h}-\frac{1}{2})
\]
\[
+\displaystyle\sum_{j=1}^{K-m+1}((1-p)a_{j-1,n+1}+pa_{j+m-1,n+1})rect(\frac{mx}{h}-\frac{1}{2}-j)
\]
\[
+ \displaystyle\sum_{j=K-m+1}^{K}((1-p)a_{j-1,n+1}rect(\frac{mx}{h}-\frac{1}{2}-j)
\]
or
\[
\Pr(\tilde{\rho}_{n}^{*}<\frac{h}{m} \mid e_{n} = x)= \displaystyle\sum_{j=0}^{K+1}a_{j,n}rect(\frac{mx}{h}-\frac{1}{2}-j)
\]
where $a_{0,n}=1$, $a_{j,n}=((1-p)a_{j-1,n+1}+pa_{j+m-1,n+1})$ for $j=1,2,......,K-m+1$ and $a_{j,n}=(1-p)a_{j,n+1}$ for $j=K-m+2,......,K$.

By induction, it can be seen that $K=N-n$:
\[
\Pr(\tilde{\rho}_{n}^{*}<\frac{h}{m} \mid e_{n} = x)= \displaystyle\sum_{j=0}^{N-n+1}a_{j,n}rect(\frac{mx}{h}-\frac{1}{2}-j)
\]
As $N \rightarrow +\infty$, $a_{j,n+1} \rightarrow a_{j,n}$, the reccurence relation $a_{j,n} = (1-p)a_{j-1,n} + p a_{j+m-1,n}$ should hold and it can be satisfied  when $a_{j,n}=\Phi(m)^{j}$ where:
\[
p\Phi(m)^{m}-\Phi(m)+1-p=0
\]
Accordingly,
\[
\displaystyle\lim_{ N \rightarrow +\infty }\Pr(\tilde{\rho}_{n}^{*}<\frac{h}{m} \mid e_{n} = x)= \displaystyle\sum_{j=0}^{\infty}\Phi(m)^{j}rect(\frac{mx}{h}-\frac{1}{2}-j)
\]
or
\[
\displaystyle\lim_{ N \rightarrow +\infty } \Pr(\tilde{\rho}_{n}^{*}< \frac{h}{m} \mid e_n = x) = \Phi(m)^{\lfloor\frac{mx}{h}\rfloor} 
\]
Note that the equation $p\Phi(m)^{m}-\Phi(m)+1-p=0$ always has a root at $\Phi(m)=1$. However, if the equation has a root in $(0,1)$, $\Phi(m)$ should be equal that value since $0<a_{j,n}<1$ unless $a_{j-1,n}=1$ and $a_{j+m-1,n}=1$. Hence, $\Phi(m)$ is the minimum root of $p\Phi(m)^{m}-\Phi(m)+1-p=0$ in $(0,1]$.
\end{proof} 



\bibliographystyle{IEEEtran}
%

\bibliography{LazySched_BlackSeaComm} 

\begin{thebibliography}{10}
\providecommand{\url}[1]{#1}
\csname url@samestyle\endcsname
\providecommand{\newblock}{\relax}
\providecommand{\bibinfo}[2]{#2}
\providecommand{\BIBentrySTDinterwordspacing}{\spaceskip=0pt\relax}
\providecommand{\BIBentryALTinterwordstretchfactor}{4}
\providecommand{\BIBentryALTinterwordspacing}{\spaceskip=\fontdimen2\font plus
\BIBentryALTinterwordstretchfactor\fontdimen3\font minus
  \fontdimen4\font\relax}
\providecommand{\BIBforeignlanguage}[2]{{%
\expandafter\ifx\csname l@#1\endcsname\relax
\typeout{** WARNING: IEEEtran.bst: No hyphenation pattern has been}%
\typeout{** loaded for the language `#1'. Using the pattern for}%
\typeout{** the default language instead.}%
\else
\language=\csname l@#1\endcsname
\fi
#2}}
\providecommand{\BIBdecl}{\relax}
\BIBdecl

\bibitem{UPE02}
E.~Uysal-Biyikoglu, B.~Prabhakar, and A.~E. Gamal, ``Energy-efficient packet
  transmission over a wireless link,'' \emph{IEEE Trans. on Networking},
  vol.~10, pp. 487--499, Aug. 2002.

\bibitem{BeGa02}
R.~A. Berry and R.~G. Gallager, ``Communication over fading channels with delay
  constraints,'' \emph{IEEE Trans. on Information Theory}, vol.~48, pp.
  1135--1149, May 2002.

\bibitem{NuSr02}
P.~Nuggehalli, V.~Srinivashan, and R.~R. Rao, ``Delay constrained energy
  efficient transmission strategies for wireless devices,'' in \emph{Proc. IEEE
  INFOCOM}, vol.~3, June 2002, pp. 1765--1772.

\bibitem{ZaMo09}
M.~A. Zafer and E.~Modiano, ``A calculus approach to energy-efficient data
  transmission with quality of service constraints,'' \emph{IEEE Trans. on
  Networking}, vol.~17, pp. 898--911, June 2009.

\bibitem{YaU2010}
J.~Yang and S.~Ulukus, ``Optimal packet scheduling in an energy harvesting
  communication system,'' \emph{IEEE Trans. on Communications}, vol.~60, pp.
  220--230, January 2012.

\bibitem{TuYe2010}
K.~Tutuncuoglu and A.~Yener, ``Optimum transmission policies for battery
  limited energy harvesting nodes,'' \emph{IEEE Trans. on Wireless
  Communications}, vol.~11, pp. 1180--1189, March 2012.

\bibitem{MAAEUHE2010}
M.~A. Antepli, E.~Uysal-Biyikoglu, and H.~Erkal, ``Optimal packet scheduling on
  an energy harvesting broadcast link,'' \emph{IEEE Journal on Selected Areas
  in Communications}, vol.~29, pp. 1721--1731, September 2011.

\bibitem{Shroff2011}
S.~Chen, P.~Sinha, N.~B. Shroff, and C.~Joo, ``Finite-horizon energy allocation
  and routing scheme in rechargeable sensor networks,'' in \emph{Proc. IEEE
  INFOCOM}, April 2011, pp. 2273--2281.

\bibitem{5766183}
O.~Ozel, K.~Tutuncuoglu, J.~Yang, S.~Ulukus, and A.~Yener, ``Adaptive
  transmission policies for energy harvesting wireless nodes in fading
  channels,'' in \emph{Information Sciences and Systems (CISS), 2011 45th
  Annual Conference on}, March 2011, pp. 1--6.

\bibitem{Tassiulas2010}
M.~Gatzianas, L.~Georgiadis, and L.~Tassiulas, ``Control of wireless networks
  with rechargeable batteries,'' \emph{IEEE Trans. on Communications}, vol.~9,
  pp. 581--593, Feb. 2010.

\bibitem{HoZang2010}
C.~K. Ho and R.~Zhang, ``Optimal energy allocation for wireless communications
  powered by energy harvesters,'' in \emph{Proc. IEEE Intl. Symposium on
  Information Theory}, June 2010, p. 2368 –2372.

\bibitem{5992841}
O.~Ozel, K.~Tutuncuoglu, J.~Yang, S.~Ulukus, and A.~Yener, ``Transmission with
  energy harvesting nodes in fading wireless channels: Optimal policies,''
  \emph{Selected Areas in Communications, IEEE Journal on}, vol.~29, no.~8, pp.
  1732--1743, September 2011.

\bibitem{6135979}
O.~Ozel, J.~Yang, and S.~Ulukus, ``Optimal scheduling over fading broadcast
  channels with an energy harvesting transmitter,'' in \emph{Computational
  Advances in Multi-Sensor Adaptive Processing (CAMSAP), 2011 4th IEEE
  International Workshop on}, Dec 2011, pp. 193--196.

\bibitem{6404770}
O.~Orhan, D.~Gunduz, and E.~Erkip, ``Throughput maximization for an energy
  harvesting communication system with processing cost,'' in \emph{Information
  Theory Workshop (ITW), 2012 IEEE}, Sept 2012, pp. 84--88.

\bibitem{6488477}
N.~Roseveare and B.~Natarajan, ``A structured approach to optimization of
  energy harvesting wireless sensor networks,'' in \emph{Consumer
  Communications and Networking Conference (CCNC), 2013 IEEE}, Jan 2013, pp.
  420--425.

\bibitem{6710085}
D.~Gunduz, K.~Stamatiou, N.~Michelusi, and M.~Zorzi, ``Designing intelligent
  energy harvesting communication systems,'' \emph{Communications Magazine,
  IEEE}, vol.~52, no.~1, pp. 210--216, January 2014.

\bibitem{6202352}
C.~K. Ho and R.~Zhang, ``Optimal energy allocation for wireless communications
  with energy harvesting constraints,'' \emph{Signal Processing, IEEE Trans.
  on}, vol.~60, no.~9, pp. 4808--4818, Sept 2012.

\bibitem{6354999}
F.~Ozcelik, G.~Uctu, and E.~Uysal-Biyikoglu, ``Minimization of transmission
  duration of data packets over an energy harvesting fading channel,''
  \emph{Communications Letters, IEEE}, vol.~16, no.~12, pp. 1968--1971,
  December 2012.

\bibitem{Hakan}
\BIBentryALTinterwordspacing
H.~Erkal, F.~Ozcelik, and E.~Uysal-Biyikoglu, ``Optimal offline broadcast
  scheduling with an energy harvesting transmitter,'' \emph{EURASIP Journal on
  Wireless Communications and Networking}, vol. 2013, no.~1, p. 197, 2013.
  [Online]. Available: \url{http://jwcn.eurasipjournals.com/content/2013/1/197}
\BIBentrySTDinterwordspacing

\bibitem{Sharma}
V.~Sharma, U.~Mukherji, V.~Joseph, and S.~Gupta, ``Optimal energy management
  policies for energy harvesting sensor nodes,'' \emph{Wireless Communications,
  IEEE Trans. on}, vol.~9, no.~4, pp. 1326--1336, 2010.

\bibitem{6176889}
A.~Sinha and P.~Chaporkar, ``Optimal power allocation for a renewable energy
  source,'' in \emph{Communications (NCC), 2012 National Conference on}, Feb
  2012, pp. 1--5.

\bibitem{6334454}
R.~Srivastava and C.~Koksal, ``Basic performance limits and tradeoffs in
  energy-harvesting sensor nodes with finite data and energy storage,''
  \emph{Networking, IEEE/ACM Trans. on}, vol.~21, no.~4, pp. 1049--1062, Aug
  2013.

\bibitem{6485022}
P.~Blasco, D.~Gunduz, and M.~Dohler, ``A learning theoretic approach to energy
  harvesting communication system optimization,'' \emph{Wireless
  Communications, IEEE Trans. on}, vol.~12, no.~4, pp. 1872--1882, April 2013.

\bibitem{6609112}
R.~Vaze, R.~Garg, and N.~Pathak, ``Dynamic power allocation for maximizing
  throughput in energy-harvesting communication system,'' \emph{IEEE/ACM Trans.
  on Networking}, vol.~22, no.~5, pp. 1621--1630, Oct 2014.

\bibitem{7511344}
D.~Shaviv and A.~Ozgur, ``Universally near-optimal online power control for
  energy harvesting nodes,'' in \emph{2016 IEEE International Conference on
  Communications (ICC)}, May 2016, pp. 1--6.

\bibitem{6875018}
B.~T. Bacinoglu and E.~Uysal-Biyikoglu, ``Finite horizon online lazy scheduling
  with energy harvesting transmitters over fading channels,'' in \emph{2014
  IEEE International Symposium on Information Theory (ISIT)}, June 2014, pp.
  1176--1180.

\bibitem{5464730}
V.~Rodriguez and R.~Mathar, ``Generalised water-filling: costly power optimally
  allocated to sub-carriers under a general concave performance function,'' in
  \emph{Information Sciences and Systems (CISS), 2010 44th Annual Conference
  on}, March 2010, pp. 1--3.

\end{thebibliography}

\end{document}